\documentclass[12pt]{article}

\usepackage[margin=1.25in]{geometry}
\usepackage{times}
\usepackage{mathptmx}
\usepackage{graphicx}
\usepackage{color}
\usepackage{ulem}
\usepackage{amsfonts,amsmath,amsthm,amssymb}
\usepackage{authblk}



\newtheorem{theorem}{Theorem}
\newtheorem{lemma}[theorem]{Lemma}
\newtheorem{prop}[theorem]{Proposition}
\newtheorem{corollary}[theorem]{Corollary}
\newtheorem{dfn}[theorem]{Definition}

\newcommand{\RR}{\mathbb{R}}
\newcommand{\RRn}{\RR^n}

\newcommand{\RP}{\mathbb{RP}}

\newcommand{\VV}{\mathcal{V}}

\newcommand{\OO}{\mathcal{O}}
\newcommand{\PP}{\mbox{colspace}}
\newcommand{\PPA}{\mbox{colspace}(A)}
\newcommand{\CC}{\mathcal{C}}
\newcommand{\LL}{\mathcal{L}}
\newcommand{\SSS}{\mathcal{S}}
\def\citep{\cite}

\usepackage{graphicx}

\title{Arbitrage and Geometry}

\author{Daniel Q. Naiman \& Edward R. Scheinerman\\ \small Department of Applied Mathematics and
Statistics\\Johns Hopkins University\\Baltimore, MD\\
daniel.naiman@jhu\\
ers@jhu.edu}

\date{\today}

\begin{document}
\normalem
\nocite{*}
\maketitle

\section{Introduction}
Arbitrage is the financial world's tooth fairy.  Just as the tooth
fairy provides a guaranteed financial return for baby teeth, an
\emph{arbitrage opportunity} is an investment, or combination of
investments, that is guaranteed to yield a profit. Arbitrage shares
another important property with the tooth fairy: neither of them exist.
The ``no free lunch'' assumption, that arbitrage opportunities in the
marketplace are unavailable, has played a fundamental role in
financial economics. In 1958, Modigliani and Miller
\citep{ModiglianiMiller} used the principle to argue that the way a
company finances itself, either by use of bonds or by issuing
additional stock is irrelevant in determining its value.  The
principle appears in the options pricing formula due to Merton, Black,
and Scholes \citep{BlackScholes,Merton}.  The principle was placed on
a solid theoretical footing in the work of Ross
\citep{Ross1976a,Ross1976b}. A gentle introduction to the arbitrage
principle is \citep{Varian}.

This article introduces the notion of arbitrage for a situation
involving a collection of investments and a payoff matrix describing
the return to an investor of each investment under each of a set of
possible scenarios. We explain the Arbitrage Theorem, discuss its
geometric meaning, and show its equivalence to Farkas' Lemma.  We then
ask a seemingly innocent question: given a random payoff matrix, what
is the probability of an arbitrage opportunity? This question leads to
some interesting geometry involving hyperplane arrangements and
related topics.

\subsection{Payoff matrices.}  A dizzying array of investment
opportunities is available to someone who has money to invest. There
are stocks, bonds, commodities, foreign currency, stock options,
futures, real estate, and of course, savings accounts.  We make the
simplifying assumption that there are $n$ possible choices of
investments available, and that at this moment of time (today) one may
invest any amount of one's finances in each. At the end of a fixed
time period (tomorrow), a single unit of cash (a dollar, a euro, a
yen, etc.)  invested in investment $i$ will be worth some amount of
money, so that changes in value of the investments can be represented
as an $n$-dimensional vector. A further simplifying assumption is that
there are finitely many mutually exclusive \emph{scenarios} that can
occur, and that each scenario leads to a specific gain (or loss) for
each investment.

As a consequence of these assumptions, we represent the changes in
value, or \emph{returns}, be they gains or losses, in an $m \times n$ \emph{payoff
  matrix} This is a matrix with a row corresponding to each scenario
and a column corresponding to each investment. The $j,i$ entry,
$a_{ji},$ gives the change in value at the end of the time period
based on a unit invested in investment $i$ under scenario $j$.  See
Figure~\ref{fig:payoff}.  Entries in this matrix are positive if the
investment gains value and negative if it loses value.

\subsection{Risk-free Rate.}  It is typical to assume that there is a
\emph{risk-free} investment (e.g. U.S. Treasury Bills) available to
the investor that yields a payoff tomorrow of $1+r$ units for each
unit invested.  Thus, a single unit today is guaranteed to be worth
$1+r$ units tomorrow.  We express gains or losses tomorrow in
\emph{present day} units, by multiplying by a \emph{discount factor}
of $1/(1+r).$

Thus, if we invest one unit of cash today in an investment and it
becomes worth $x$ units tomorrow, the relevant entry in the payoff
matrix is its \emph{present value}  $-1+x/(1+r)$.

\begin{center}
\begin{figure}[ht]
$$
\begin{array}{c|ccccccc}
     & I_1 & I_2 & \cdots & I_i & \cdots & I_n \\ \hline
\mbox{Scenario}_1 & a_{11} & a_{12} & \cdots & a_{1i} & \cdots & a_{1n} \\
\mbox{Scenario}_2 & a_{21} & a_{22} & \cdots& a_{2i} & \cdots&  a_{2n}\\
\vdots & \vdots & \vdots & &\vdots &&\vdots  \\
\mbox{Scenario}_j & a_{j1} & a_{j2} & \cdots& a_{ji} & \cdots&  a_{jn}\\
\vdots & \vdots & \vdots & &\vdots &&\vdots  \\
\mbox{Scenario}_m & a_{m1} & a_{m2} & \cdots& a_{mi} & \cdots&  a_{mn}\\
\end{array}
$$
\caption{The $j,i$ entry of \emph{payoff matrix} $A$ is the
  gain (or loss) tomorrow under scenario $j$ for a unit invested in
  investment $i$ today.}
\label{fig:payoff}
\end{figure}
\end{center}

\subsection{Example: The Bernoulli model}  The following simple example
is a standard one that appears in most introductory
books on options pricing, and is included here
to give the reader a taste for how options are priced. Assume that
when we invest one unit in a certain stock today, currently priced at
$S$, its value tomorrow either be one of two possibilities, either
$Su$ or $Sd$, where $d < 1+r < u$, so that the stock under- or
over-performs the risk-free rate.  Consequently, there are two
possible payoffs tomorrow of a single unit investment in the stock
today, $u$ or $d$, and discounting gives either a value change of
$-1+u/(1+r)$ or $-1+d/(1+r)$ in today's terms.

Another possibility is to purchase a stock option referred to as a
\emph{call}: we pay someone an amount $P$ today for the right to buy
from them a share of the stock at a predetermined price $K$ (called
the \emph{strike} price) tomorrow (no matter what happens to the
market price), where $Sd < K < Su.$ If the stock goes up, then we
would exercise our right to buy it at $K$ and then immediately sell it
at the market price of $Su,$ instantly netting $Su-K.$ In this case,
the net gain for buying the option (taking into account discounting)
is
$$
-P + (Su-K)/(1+r).
$$
In other words, investing a single unit of money in such an option
results in a value change of
$$
-1+(Su-K)/(P(1+r)).
$$
On the other hand, if the stock goes down we would not want to
exercise the option since we would be better off purchasing the
stock at the lower market price, and in this case we have simply lost
the cost of the option.

Thus, we can write down a $2 \times 2$ payoff matrix corresponding to
the two investment opportunities (stock, option) and the two possible
scenarios (stock goes up, stock goes down) as in
Figure~\ref{fig:payoff-stock-option}.
A third column representing the payoff for a risk-free investment of
one unit could be adjoined to this matrix. However, this column would consist of zeros,
since discounting leaves the present value of such an investment unchanged.

\begin{center}
\begin{figure}[ht]
$$
\begin{array}{c|cc}
                & \mbox{stock} & \mbox{option} \\ \hline
\mbox{up}   & -1+u/(1+r)  & -1+(Su-K)/(P(1+r)) \\
\mbox{down} & -1+d/(1+r)& -1 \\
\end{array}
$$
\caption{The payoff matrix for a stock whose price today is $S$ and whose
  price tomorrow takes one of two values $Su$ or $Sd$, and for an
  option whose price today is $P$ and gives one the right to purchase
  the stock at a strike price of $K$ tomorrow.  The risk-free interest
  rate for this one day period is $r$.}
\label{fig:payoff-stock-option}
\end{figure}
\end{center}

\subsection{Arbitrage}  In finance, an \emph{arbitrage} opportunity is a
combination of investments whose return is guaranteed to outperform
the risk-free rate.  If such an opportunity were to exist we could
borrow money at the lower rate, get the higher return, repay our debt,
and pocket the difference. Since our initial investment is unlimited,
so is the amount of money we could make.  In modeling the behavior of
markets, a common approach is to assume that an equilibrium condition
exists under which such opportunities are not available; if they were
to become available, they could only exist for an extremely brief
period of time. Almost as soon as they are discovered, they are wiped
out as a result of the response of investors.

The no arbitrage assumption is as fundamental a principle for finance
as Newton's first and second laws of motion are for physics.  Another
important principle bears semantic resemblance to Newton's third law:
for every action there is an equal and opposite reaction.  In finance,
everything that can be bought can also be sold.

A consequence of this is that for every available payoff column there
is an investment opportunity that achieves exactly the opposite
effect, that is, changes the signs of the payoffs. This may seem
counterintuitive. For example, what investment leads to a payoff that
is complementary to the purchase of a share of stock?  The answer is
to \emph{short sell it}, that is, in essence, borrow it, sell it, and
later buy it back and subsequently return it to the
lender.\footnote{In fact, short-selling typically requires that one
  put aside assets as collateral, and practically speaking, investors
  are constrained by their actions.}  Consequently, a column in the
payoff matrix may be replaced by any nonzero multiple of the column,
and the modified matrix represents the same investment opportunities.

Given a specific payoff matrix $A$ that gives the behavior of all
possible investments and scenarios, an arbitrage opportunity is said
to exist if there is a linear combination of columns of $A$ all of
whose entries are strictly positive. Thus, an arbitrage opportunity is
a combination of buys and sells of the basic investments that yields a
net gain under all scenarios.

The absence-of-arbitrage assumption leads to constraints on payoff
matrices. To illustrate this, consider the stock/option example above.
If the two columns of the matrix in
Figure~\ref{fig:payoff-stock-option} are not scalar multiples of each
other, then they form a basis of $\RR^2$ and so we can find a
combination of buys and sells of the two investments to yield any
payoff vector we wish. Thus, the absence of arbitrage implies that the
two columns are multiples of each other, and this gives
$$
P=\frac{(Su-K)(1+r-d)}{ (1+r)(u-d)},
$$
so that once the various parameters involved, including the
interest rate, the possible factors by which the stock price can
change, the price of the stock today, and the strike price $K,$ are
set, the price of the option can be calculated.

\section{The Arbitrage Theorem}

The Arbitrage Theorem provides an interesting and convenient
characterization of the no arbitrage condition.  Given an $m \times n$
payoff matrix $A,$ and an $n$-vector $x$, the product $Ax$ gives the
payoff vector that results from investing $x_i$ in investment $i,$ for
$i=1,\ldots,n.$ Consequently, for a given payoff matrix $A,$ its column space \PPA,
represents the set of payoff vectors corresponding to all
possible combinations of investments.
This forms a subspace of $\RR^m$ of dimension at most $n$.

\begin{theorem}[Arbitrage]
  Given an $m \times n$ payoff matrix $A$, exactly one of the
  following statements holds:
  \begin{itemize}
  \item[(A1)] Some payoff vector in $\PP(A)$ has all positive
    components, i.e. $Av>0$ for some $v \in \RR^n.$
  \item[(A2)] There exists a probability vector $\pi =
    [\pi_1,\ldots,\pi_m]^t$ that is orthogonal to every column of $A,$
    i.e. $\pi^t A=0,$ where $\pi^t \geq 0$ and $\pi^t 1=1.$
  \end{itemize}
\end{theorem}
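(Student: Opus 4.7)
My plan is to prove the dichotomy in the standard two-part fashion: first verify that (A1) and (A2) cannot hold simultaneously, then use a separating hyperplane argument to show that at least one of them must hold.

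For the easy direction, I would assume both conditions hold and derive a contradiction. If $\pi \geq 0$ with $\pi^t 1 = 1$ satisfies $\pi^t A = 0$, and if $v$ satisfies $Av > 0$ componentwise, then on the one hand the scalar $\pi^t(Av) = (\pi^t A)v = 0$, while on the other hand $\pi^t(Av) = \sum_j \pi_j (Av)_j > 0$, since every coordinate of $Av$ is strictly positive and at least one $\pi_j$ is positive. This is immediate.

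For the substantive direction, I would suppose (A1) fails and construct $\pi$ as in (A2). Let $K = \PPA$, which is a linear subspace of $\RR^m$, and let $P = \{y \in \RR^m : y > 0\}$ be the open positive orthant. The failure of (A1) says precisely that $K \cap P = \emptyset$. Both sets are convex and $P$ is open and non-empty, so by the separating hyperplane theorem there exist a nonzero $\pi \in \RR^m$ and a scalar $\alpha$ with $\pi^t y \leq \alpha$ for all $y \in K$ and $\pi^t y \geq \alpha$ for all $y \in P$. Two refinements then deliver the result. First, because $K$ is a subspace, $\lambda y \in K$ for every $\lambda \in \RR$ and every $y \in K$; the inequality $\lambda \pi^t y \leq \alpha$ holding for all $\lambda$ forces $\pi^t y = 0$ on $K$, so $\pi^t A = 0$ and $\alpha \geq 0$. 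Second, for each coordinate $i$, letting the other coordinates of $y \in P$ tend to zero in $\pi^t y \geq \alpha \geq 0$ gives $\pi_i \geq 0$. Since $\pi$ is nonzero and nonnegative, $\pi^t 1 > 0$, so rescaling produces the desired probability vector.

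The step I expect to be the main obstacle is applying the separating hyperplane theorem cleanly: one must argue that the \emph{open} orthant $P$ being disjoint from the subspace $K$ really does yield a nontrivial linear (not merely affine) separating functional, and then one must squeeze coordinatewise nonnegativity of $\pi$ out of a single inequality on $P$ by a limiting argument. Everything else is bookkeeping. An alternative route, essentially equivalent as the introduction notes, would be to cite Farkas' Lemma directly and translate its statement into the language of payoff matrices; I would mention this equivalence so the reader sees the connection advertised in the introduction.
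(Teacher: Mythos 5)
Your proof is correct, but it takes a genuinely different route from the paper. The paper proves the Arbitrage Theorem as a consequence of Farkas' Lemma: it assumes (A2) fails, rewrites that failure as the infeasibility of a nonnegative system $\tilde{A}\pi = \tilde{b}$ with the augmented matrix $\tilde{A} = \left[\begin{smallmatrix} A^t \\ 1 \end{smallmatrix}\right]$ and $\tilde{b} = \left[\begin{smallmatrix}0\\1\end{smallmatrix}\right]$, invokes alternative (ii) of Farkas to extract a $y=\left[\begin{smallmatrix}v\\s\end{smallmatrix}\right]$ with $Av + s\mathbf{1}\geq 0$ and $s<0$, and concludes $Av>0$. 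You instead argue in the opposite direction and from first principles: assuming (A1) fails, you note that the subspace $\PPA$ is disjoint from the open positive orthant, invoke the separating hyperplane theorem for disjoint convex sets, and then use the two structural refinements (homogeneity of the subspace to force $\pi^t A = 0$ and a coordinatewise limiting argument in the open orthant to force $\pi\geq 0$) to land exactly on (A2) after normalization. Both are valid and both reduce the hard content to a single convexity principle, but they make different trades: the paper's route is shorter because it outsources the geometry to Farkas, and it is chosen deliberately because Section~2 is organized around the two-way equivalence between the Arbitrage Theorem and Farkas' Lemma; your route is more self-contained, goes directly to the geometric heart of the matter (a subspace missing the open orthant), avoids the matrix augmentation trick, and generalizes more readily to infinite-dimensional settings where a separation theorem is available. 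One small caution if this were written up: you should state precisely which separation theorem you invoke — the relevant fact is that in finite dimensions two disjoint nonempty convex sets admit a separating hyperplane with nonzero normal (openness of $P$ is not needed for existence, though it is what later lets you squeeze out $\pi_i\geq 0$) — since that theorem carries essentially the same logical weight as Farkas' Lemma itself.
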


In other words, the absence of arbitrage is equivalent to the
existence of an assignment of probabilities to scenarios under which
every investment has an expected return of zero.

For a concrete example, consider the situation described above in
which our payoff matrix contains a column of the form
$$
\left[
\begin{array}{cc}
-1+u/(1+r)\\
-1+d/(1+r)\\
\end{array}
\right].
$$
The absence of arbitrage means that there is a probability vector
$$
\left[
\begin{array}{cc}
\pi_u\\
\pi_d\\
\end{array}
\right]
$$
orthogonal to it and every other column of the payoff matrix.  It
follows immediately that $\pi_u = \frac{1+r-d}{u-d}$ and $\pi_d =
1-\pi_u = \frac{u-1-r}{u-d}$.

These probabilities do not have the familiar interpretation as
relative frequencies.  However, the probability assignment is a
practical one in that under the assumption of no arbitrage, any
investment should have a zero expected payoff.  Thus, these
probabilities provide us with a method for establishing the price of
any security whose payoff depends on the behavior of the stock.  For
example, suppose we have the opportunity to invest in a security whose
price is $P$ today and whose payoff tomorrow is $\rho_u$ if the stock
goes up and $\rho_d$ if the stock goes down.  Then we can add a column
to our payoff matrix describing the present value corresponding to a
unit investment
$$
\left[
\begin{array}{cc}
-1+\rho_u/P(1+r)\\
-1+\rho_d/P(1+r).\\
\end{array}
\right].
$$
Absence of arbitrage leads to the conclusion that
$$
\pi_u \left(-1+\rho_u/P(1+r) \right) + \pi_d\left(-1+\rho_d/P(1+r) \right)=0,
$$
and solving for $P$ we obtain
$$
P=\pi_u \rho_u/(1+r) + \pi_d \rho_d/(1+r),
$$
that is, the no-arbitrage price of the investment is the
\emph{expected
value} of its discounted payoff.

The Arbitrage Theorem  (also known as \emph{Gordon's Theorem}) is
but one example of a \emph{theorem of the
  alternative} appearing in convex analysis in which one asserts the
existence of a vector satisfying exactly one of two properties.  One
of the more fundamental results of this type is Farkas' Lemma
\citep{Farkas}, treatments of which may be found in many books on
finite-dimensional optimization, including
\citep{BT,Padberg,StoerWitzgall} (see also \citep{AvisKaluzny}).  The
Arbitrage Theorem is frequently presented as a simple consequence of
Farkas' Lemma, for example, see \citep{BT,RossSheldon}.
A lengthy treatment of theorems of the alternative appears in
\citep{mangasarian}, together with a table (Table 2.4.1) of eleven such theorems.

\begin{lemma}[Farkas]
Given an $m \times n$ matrix $A$ and an
$n$-vector $b,$ exactly one of the following statements holds:
\begin{itemize}
\item[(i)]
There exists $x \geq 0$ such that $Ax=b.$
\item[(ii)]
There exists $y$ such that $y^tA\geq 0$ and $y^tb< 0.$
\end{itemize}
Equivalently, the following two statements
are equivalent:
\begin{itemize}
\item[(F1)]
There exists $x \geq 0$ such that $Ax=b.$
\item[(F2)]
$y^tA\geq 0$ implies $y^tb\geq 0.$
\end{itemize}
\end{lemma}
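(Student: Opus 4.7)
The plan is to prove the stated dichotomy (exactly one of (i), (ii) holds) and then derive the equivalence (F1) $\Leftrightarrow$ (F2) as a formal consequence. That (i) and (ii) cannot both hold is immediate: if $Ax=b$ with $x\geq 0$ and $y^tA\geq 0$, then $y^tb=(y^tA)x$ is a sum of products of nonnegative numbers, hence $y^tb\geq 0$, contradicting $y^tb<0$. The real content is showing that at least one must hold.

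My strategy for the nontrivial direction is geometric. Consider the set
$$
K=\{Ax:x\in\RR^n,\ x\geq 0\}\subseteq\RR^m,
$$
the conic hull of the columns of $A$. This is a convex cone containing the origin. If $b\in K$ then (i) holds, so assume $b\notin K$. \emph{Assuming that $K$ is closed}, the separating hyperplane theorem applied to the disjoint closed convex set $K$ and the compact convex singleton $\{b\}$ produces some $y\in\RR^m$ and $\alpha\in\RR$ with $y^tz>\alpha>y^tb$ for every $z\in K$. Because $K$ is a cone, applying this to $tz$ with $z\in K$ and $t\downarrow 0$ gives $\alpha\leq 0$, so $y^tb<0$; letting $t\to\infty$ forces $y^tz\geq 0$ for every $z\in K$, and applied to each column $Ae_i\in K$ this yields $y^tA\geq 0$. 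That is exactly (ii).

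The main obstacle is therefore closedness of $K$: continuous images of closed sets need not be closed, and the orthant $\{x\geq 0\}$ is noncompact, so some work is needed. The plan is a Caratheodory-style reduction. Any nonnegative combination of the columns of $A$ can be rewritten so that only a \emph{linearly independent} subcollection carries positive weight (if the active columns are dependent, slide the coefficients along a dependence until one hits zero, keeping the image fixed). Hence
$$
K=\bigcup_{S}\{A_S x_S:x_S\geq 0\},
$$
where $S$ ranges over the finitely many index sets of linearly independent columns of $A$ and $A_S$ is the corresponding submatrix. For each such $S$, the map $x_S\mapsto A_S x_S$ is a linear injection, hence a homeomorphism onto its image, so it carries the closed orthant to a closed subset of $\RR^m$. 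A finite union of closed sets is closed, so $K$ is closed and the separation argument goes through.

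Finally, the equivalent formulation is a formality. (F1) implies (F2) is precisely the mutual-exclusion step above. For (F2) implies (F1): if (F1) fails, then by the dichotomy there exists $y$ with $y^tA\geq 0$ and $y^tb<0$, contradicting (F2).
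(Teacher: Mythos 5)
Your proof is correct. However, it bears noting that the paper does not actually prove Farkas' Lemma at all: it states the lemma, cites the literature (\cite{Farkas,BT,Padberg,StoerWitzgall,AvisKaluzny,mangasarian}) for its proof, and then focuses on showing that the Arbitrage Theorem follows from Farkas' Lemma and, conversely, that Farkas' Lemma can be recovered from the Arbitrage Theorem (using Lemma~\ref{Farkas_for_pointed_cones_lemma} for pointed cones and Lemma~\ref{cone_decomposition_lemma} to reduce to that case). So the two theorems are treated as equivalent black boxes, and neither is given a standalone proof in the paper. What you supply instead is a self-contained, classical proof of Farkas' Lemma itself: you show the cone $K=\{Ax:x\geq0\}$ is closed by a Carath\'eodory-style reduction to finitely many subcones spanned by linearly independent columns, then invoke strict separation of the closed convex cone $K$ from the point $b$, and finish by taking $t\downarrow 0$ and $t\to\infty$ to extract $y^tb<0$ and $y^tA\geq 0$. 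All the steps are sound; in particular the closedness argument correctly handles the standard pitfall that the image of a closed noncompact set under a linear map need not be closed, by exploiting that each $A_S$ with linearly independent columns is a proper injection onto a closed subspace. Your approach has the advantage of being independent of the Arbitrage Theorem (which the paper's reverse derivation is not), at the cost of relying on the separating hyperplane theorem and a small amount of real analysis that the paper deliberately avoids by outsourcing Farkas to the literature.
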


\begin{dfn}
The \emph{polyhedral convex cone} generated by
$a^{(i)}\in \RR^m,~i=1,\ldots,n$ is the subset of $\RR^m$ defined by
$$
\left\{ \sum_{i=1}^n x_i a^{(i)} ~:~ x_i \geq 0 \right\}.
$$
We denote this set by $\CC(a^{(i)},i=1,\ldots,n)$ and we
refer to the $a^{(i)},~i=1,\ldots,n$ as \emph{generators} of the cone.
  \end{dfn}

Farkas' Lemma has an intuitive geometric interpretation. Let the columns
of $A$ be denoted by $a^{(i)},~i=1,\ldots,n,$ and
take
$$
\CC_A := \CC(a^{(i)},i=1,\ldots,n).
$$
Condition (F1) says that $b \in \CC_A.$ On the other hand, assuming
$b \neq 0,$ (F2) says that if $y$ makes a non-obtuse angle with every
non-zero cone generator $a^{(i)}$ then the angle $y$ makes with
$b$ is non-obtuse.

We show how the Arbitrage Theorem follows from Farkas' Lemma and,
conversely, how to prove Farkas' Lemma from the Arbitrage
Theorem. Farkas' Lemma is central to the theory of linear programming
and, in the same spirit, \cite{reichmeider} shows how various
combinatorial duality theorems can be derived from one another.

\subsection{Proof of the Arbitrage Theorem using Farkas' Lemma}

\begin{proof}
First, if (A1) and (A2) are both satisfied, then we have
$$
\pi^t A v = 0v =0,
$$
but on the other hand $Av>0$ so $\pi^t Av >0$ since the entries in $\pi$
are nonnegative and sum to one. It follows that (A1) and (A2) are mutually
exclusive.

Next we show that (A1) and (A2) cannot both fail. If (A2) fails, then there is
no solution to
$$
\label{no_solution_equation}
\left[
\begin{array}{c}
A^t \\
1\\
\end{array}
\right] \pi =
\left[
\begin{array}{c}
0 \\
1\\
\end{array}
\right]
$$
with $\pi \geq 0.$
It follows that condition (i)
in Farkas' Lemma fails for
$\tilde{A} =\left[
\begin{array}{c}
A^t \\
1\\
\end{array}
\right]$
and
$\tilde{b} =\left[
\begin{array}{c}
0 \\
1\\
\end{array}
\right],$
so Farkas' Lemma guarantees that (ii) holds.
Therefore, there exists an $n+1$-vector $y$ such that
$y^t \tilde{A}\geq 0$
and
$y^t \tilde{b} < 0.$
Writing
$
y = \left[
\begin{array}{c}
v \\
s\\
\end{array}
\right]
$
where $v = [v_1,\ldots,v_n]^t$ and $s \in \RR$ these conditions say that
$Av + s1 \geq 0$ and
$s <0,$ and we conclude that $Av >0,$ so (A1) is valid.
\end{proof}

\subsection{Proving Farkas's Lemma from the Arbitrage Theorem}
As it turns out, the Arbitrage Theorem together, with some work, leads
to a proof of Farkas' Lemma.  Note first that the implication (F1)
$\Rightarrow$ (F2) is immediate: if we can find $x \geq 0$ such that
$Ax=b,$ then assuming that $y^t A \geq 0$ we have $y^t b = y^t Ax \geq
0.$

Our focus is on the more difficult implication (F2) $\Rightarrow$ (F1).
It is instructive to first prove this under a certain technical assumption.
For a polyhedral convex cone
$\CC \subseteq \RR^m,$ we define
$$
\LL(\CC) := \CC \cap -\CC.
$$
This is easily seen to form a subspace of $\RR^m$ and consists of
all lines completely contained in $\CC$ passing through the origin.

\begin{dfn}
  A polyhedral convex cone $\CC$ is \emph{pointed} if
  $\LL(\CC)=\{0\}.$ In other words, if it contains no lines through
  the origin, that is, $v, -v \in \CC$ implies $v=0.$
\end{dfn}

The reader is referred to Figures \ref{fig:pointed-cone} and \ref{fig:nonpointed-cone}.

\begin{figure}[ht]
\scalebox{.7}{
\begin{minipage}{\textwidth}
{\input{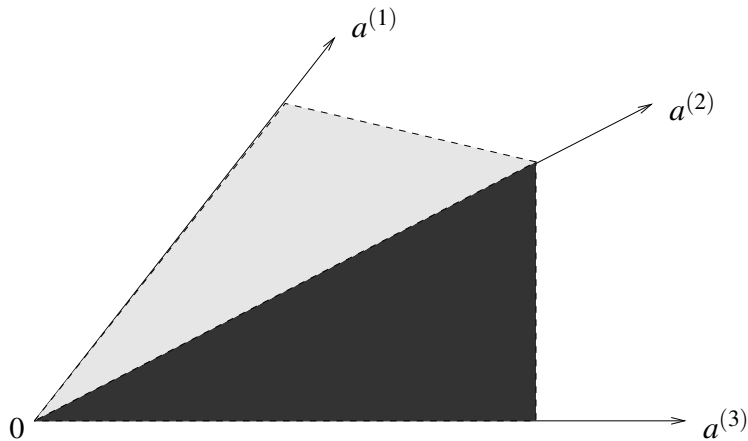}}
\end{minipage}}
  \caption{The vectors $a^{(i)},i=1,2,3$ generate a pointed cone in
    $\RR^3.$}
  \label{fig:pointed-cone}
\end{figure}

\begin{figure}[ht]
\scalebox{.7}{
\begin{minipage}{\textwidth}
{\input{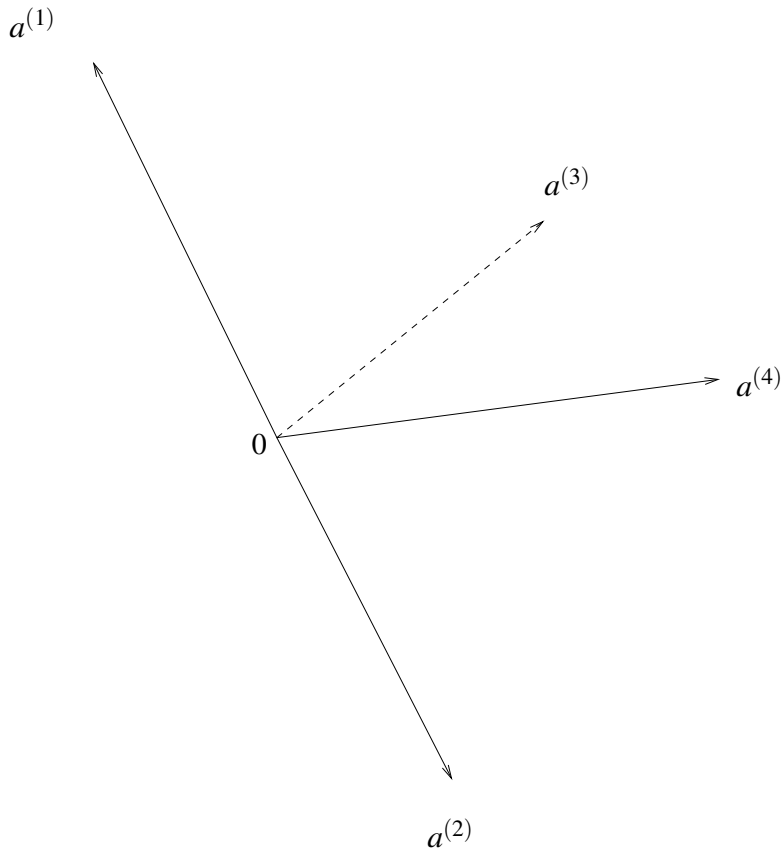}}
\end{minipage}}
  \caption{The vectors $a^{(i)},i=1,2,3,4$ generate a non-pointed cone in
    $\RR^3.$}
  \label{fig:nonpointed-cone}
\end{figure}

It is an elementary exercise to prove that pointedness for the cone
$\CC_A$ says that whenever
$\sum_{i=1}^n x_i a^{(i)} = 0$ for $x_i \geq 0,~i=1,\ldots,n$ we can conclude that
$x_i a^{(i)}=0$ for all $i.$

\begin{lemma}
  \label{Farkas_for_pointed_cones_lemma}
  The implication (F2) $\Rightarrow$ (F1) holds when $A$ is an $m
  \times n$ matrix such that $\CC_A$ is a pointed cone.
\end{lemma}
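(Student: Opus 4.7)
The plan is to apply the Arbitrage Theorem to a matrix built from $A$ and $b$, and to use pointedness at the very end to eliminate a single degenerate subcase. As a preliminary reduction, I would observe that deleting any zero column of $A$ alters neither $\CC_A$, nor the truth of (F1), nor the truth of (F2) (each because zero columns contribute nothing to the relevant equation or inequality), so I may assume every $a^{(i)}$ is nonzero.

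The key construction is the $(n+1)\times m$ matrix
$$
M = \begin{pmatrix} A^t \\ -b^t \end{pmatrix}.
$$
I would feed this into the Arbitrage Theorem. Alternative (A1) for $M$ would supply a vector $v \in \RR^m$ with $A^t v > 0$ and $-b^t v > 0$, i.e.\ $v^t a^{(i)} > 0$ for every $i$ and $v^t b < 0$; taking $y = v$, this directly contradicts the hypothesis (F2), so (A1) is ruled out. Hence (A2) must hold: there is a nonnegative vector $\pi = (\pi_1,\ldots,\pi_n,\pi_0)^t$ with $\pi_1+\cdots+\pi_n+\pi_0 = 1$ and $\pi^t M = 0$, which unpacks to
$$
\sum_{i=1}^n \pi_i\, a^{(i)} = \pi_0\, b.
$$

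The closing step is to split on $\pi_0$. If $\pi_0 > 0$, setting $x_i := \pi_i/\pi_0 \geq 0$ yields $Ax=b$, establishing (F1). The genuinely delicate case, and what I expect to be the main obstacle, is $\pi_0 = 0$: here $\sum_i \pi_i a^{(i)} = 0$ with $\pi_i \geq 0$ summing to $1$, which is perfectly consistent for a general cone and is exactly where pointedness has to do work. The characterization of pointedness noted in the excerpt forces $\pi_i a^{(i)} = 0$ for every $i$; combined with the preliminary reduction that no column $a^{(i)}$ vanishes, this gives $\pi_i = 0$ for every $i$, contradicting $\sum_i \pi_i = 1$. Thus the technical hypothesis is invoked precisely to close off this one subcase, which also explains why the lemma is restricted to pointed cones.
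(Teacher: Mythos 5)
Your proof is correct and follows essentially the same route as the paper's: the same preliminary removal of zero columns, the same application of the Arbitrage Theorem to $\begin{pmatrix} A^t \\ -b^t \end{pmatrix}$, and the same use of pointedness to rule out the vanishing of the last coordinate of the probability vector. You spell out the $\pi_0 = 0$ subcase in more detail than the paper, which simply asserts that $s>0$ ``since otherwise the assumption of pointedness is violated,'' but the underlying argument is identical.
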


\begin{proof}
  Observe that (F2) holds for a matrix $A,$ then it also holds for the
  matrix obtained by removing any zero columns of $A.$ Also, if (F1)
  holds for the matrix with the zero columns of $A$ removed then it
  also holds for the original matrix. It follows that, without loss of
  generality, we may assume that all of the columns of $A$ are
  nonzero.

  It follows from (F2) that
  $$
  \left[ \begin{array}{c} A^t \\ -b^t \end{array} \right]y \not > 0,
  $$
  for all $y.$ Thus, condition (A1) fails, and from the Arbitrage
  Theorem we conclude that (A2) holds, that is, there exists a
  probability vector $\left[ \begin{array}{c} u \\ s \end{array}
  \right]$ with $u$ an $n$-vector, and $s$ a scalar, such that
  $$
  [u^t,s] \left[ \begin{array}{c} A^t \\ -b^t \end{array} \right]=0.
  $$
  Expanding and transposing, we obtain $Au = s b.$ We can assume
  $s>0$ since otherwise the assumption of pointedness is violated.  It
  follows that we can divide both sides by $s$ and we obtain (F1).
\end{proof}

To establish the general case we need to collect some basic
results concerning the structure of non-pointed polyhedral convex cones.
The basic idea is that one can always represent a cone as a direct
sum of a linear subspace and a pointed \emph{slice} of the cone (see Figure \ref{fig:slicing-a-cone}).
The proof is left as an elementary exercise.

\begin{lemma}
  \label{cone_decomposition_lemma}
  Given a polyhedral convex cone $\CC = \CC(a^{(i)},i=1,\ldots,n)
  \subseteq \RR^m,$ let $\LL=\LL(\CC)$ and let $\LL^\perp$ denote its
  orthogonal complement. Then any given $x\in \CC$ can be expressed
  uniquely as $x = u + v$ where $u \in \CC \cap \LL^\perp,$ $v \in
  \LL.$ We express this statement symbolically by writing
  $$
  \CC = (\CC\cap \LL^\perp)  \oplus \LL.
  $$
  Furthermore, if
  $\tilde{a}^{(i)}$
  denotes the orthogonal projection of $a^{(i)}$ onto
  $\LL^\perp,$ for $i=1,\ldots,n,$ then
  $$
  \CC\cap \LL^\perp = \CC( \tilde{a}^{(i)},i=1,\ldots,n),
  $$
  and the slice $\CC\cap \LL^\perp$ is a pointed polyhedral convex cone.
\end{lemma}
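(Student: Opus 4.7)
My plan is to reduce everything to the ambient orthogonal decomposition $\RR^m = \LL \oplus \LL^\perp$ and verify that it is compatible with $\CC$. Before starting, I would record two basic facts used throughout. First, $\LL = \CC \cap -\CC$ is a linear subspace: it is closed under negation by definition and closed under addition because $\CC$ itself is. Second, $\CC$ is closed under addition, since a sum of two nonnegative combinations of the $a^{(i)}$ is again such a combination. Together these give the key principle that if $x \in \CC$ and $v \in \LL$, then $x - v \in \CC$ (because $-v \in \LL \subseteq \CC$).

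For the direct-sum statement, given $x \in \CC$ I would write $x = u + v$ using the ambient orthogonal decomposition, with $u \in \LL^\perp$ and $v \in \LL$. By the principle just noted, $u = x - v \in \CC$, so in fact $u \in \CC \cap \LL^\perp$. Uniqueness of such a decomposition is inherited from uniqueness of the ambient orthogonal decomposition together with $\LL \cap \LL^\perp = \{0\}$.

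For the generator identification $\CC \cap \LL^\perp = \CC(\tilde{a}^{(i)}, i=1,\ldots,n)$, set $p^{(i)} = a^{(i)} - \tilde{a}^{(i)}$, so that $p^{(i)} \in \LL$ by definition of the orthogonal projection onto $\LL^\perp$. To prove the inclusion $\CC(\tilde{a}^{(i)}, i=1,\ldots,n) \subseteq \CC \cap \LL^\perp$, I would first show each $\tilde{a}^{(i)}$ is itself a member: by the key principle, $\tilde{a}^{(i)} = a^{(i)} - p^{(i)} \in \CC$, and $\tilde{a}^{(i)} \in \LL^\perp$ by construction; the rest follows because $\CC \cap \LL^\perp$ is itself a convex cone and hence closed under nonnegative combinations. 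For the reverse inclusion, take $x \in \CC \cap \LL^\perp$ and write $x = \sum_{i=1}^n x_i a^{(i)}$ with $x_i \geq 0$; projecting both sides onto $\LL^\perp$ fixes $x$ and sends each $a^{(i)}$ to $\tilde{a}^{(i)}$, yielding $x = \sum_{i=1}^n x_i \tilde{a}^{(i)}$. Pointedness of the slice is then immediate: if both $y$ and $-y$ lie in $\CC \cap \LL^\perp$, then $y \in \LL$ by definition of $\LL$, while $y \in \LL^\perp$ by hypothesis, so $y = 0$.

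The step I expect to be most delicate is the forward inclusion in the generator identification: the projected generators $\tilde{a}^{(i)}$ are not visibly in $\CC$ from the definition of orthogonal projection, and showing that they are requires exploiting the fact that $\LL$ lies inside $\CC$ together with its negative. Once that single observation is in hand, everything reduces to routine bookkeeping with orthogonal projections and the additivity of the cone $\CC$.
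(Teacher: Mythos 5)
The paper leaves this lemma's proof as an elementary exercise, so there is no in-text argument to compare against. Your argument is correct and supplies exactly the intended proof: the one substantive observation is that $\LL \subseteq \CC$ and, $\LL$ being a subspace, also $-\LL = \LL \subseteq \CC$, so that subtracting the $\LL$-component of any point of $\CC$ keeps you inside $\CC$; from this the direct-sum decomposition, the generator identification via linearity of the orthogonal projection, and the pointedness of $\CC \cap \LL^\perp$ all follow in a few lines as you describe.
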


\begin{figure}[ht]
\scalebox{.7}{
\begin{minipage}{\textwidth}
{\input{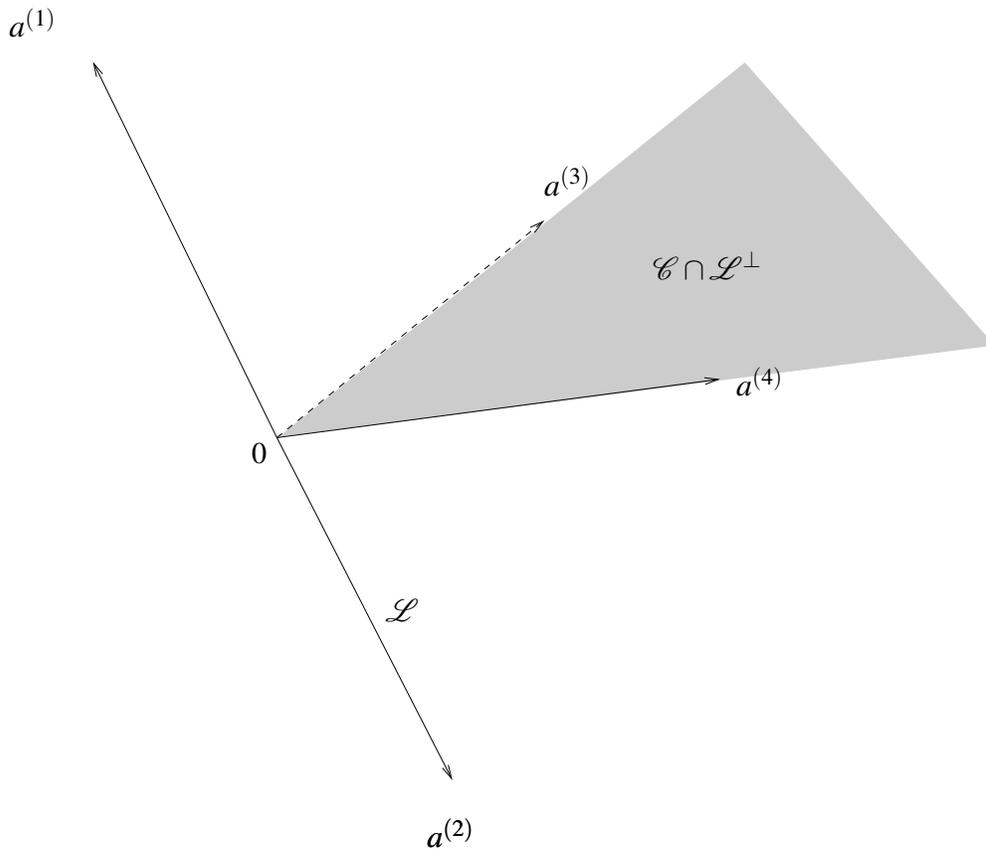}}
\end{minipage}}
  \caption{Slicing the cone in Figure~\ref{fig:nonpointed-cone} yields a pointed cone.}
  \label{fig:slicing-a-cone}
\end{figure}

Armed with these results we are now prepared to use the Arbitrage
Theorem to show (F2) $\Rightarrow$ (F1).

\begin{proof}[Proof of Farkas' Lemma by way of the Arbitrage Theorem]
  Assume (F2) holds for a given $m \times n$ matrix $A$ with columns
  $a^{(i)},i=1,\ldots,n.$ Take $\CC = \CC(a^{(i)},i=1,\ldots,n)$,
  $\LL=\LL(\CC),$ $\LL^\perp,$ and $\tilde{a}^{(i)}$ to be the orthogonal projection
  of $a^{(i)})$ on $\LL^\perp,$ for $i=1,\ldots,n.$
  We can write $a^{(i)} =
  \tilde{a}^{(i)} + \hat{a}^{(i)},$ and where $\hat{a}^{(i)} \in \LL.$
  Also, we define $\tilde{b}$ to be the orthogonal projection of
  $b$ onto $\LL^\perp$ and we write $b =
  \tilde{b} + \hat{b},$ where $\hat{b} \in \LL.$

  Now suppose $y \in \LL^\perp \subseteq \RR^m$ and $y^t
  \tilde{a}^{(i)}\geq 0$ for $i=1,\ldots,n.$ Then $y^t a^{(i)} = y^t
  \tilde{a}^{(i)} + y^t \hat{a}^{(i)} = y^t \tilde{a}^{(i)}\geq 0,$
  for $i=1,\ldots,n$ and we can conclude from (F2) that $y^t b \geq
  0.$ It follows that $y^t\tilde{b} = y^t(b - \hat{b}) = y^t b \geq
  0.$

  Summarizing, we have shown that
  $$
  y \in \LL^\perp \subseteq \RR^m
  \text{ and }
  y^t
  \tilde{a}^{(i)}\geq 0 \text{ for }
  i=1,\ldots,n \quad\Rightarrow\quad
  y^t \tilde{b} \geq 0.
  \leqno{(\text{F}2')}
  $$

  Since the $\tilde{a}^{(i)},i=1,\ldots,n$ generate a pointed cone in
  $\LL^\perp$ and $\tilde{b} \in \LL^\perp,$ Lemma~\ref{cone_decomposition_lemma}
  allows us to conclude that
  $\tilde{b} \in \CC(\tilde{a}^{(i)},i=1,\ldots,n),$ i.e.  we can
  write $\tilde{b} = \sum_{i=1}^n x_i \tilde{a}^{(i)}$ for some $x_i
  \geq 0.$ It follows that
  $$
  b = \tilde{b} + \hat{b} =\sum_{i=1}^n x_i \tilde{a}^{(i)}+\hat{b}
  = \sum_{i=1}^n x_i (a^{(i)}+\hat{a}^{(i)}) +\hat{b} = \sum_{i=1}^n
  x_i a^{(i)}+\left\{ \sum_{i=1}^n x_i \hat{a}^{(i)}
    +\hat{b}\right\}\in \CC,
  $$
  where we have used the fact that $\LL \subseteq \CC$ to conclude
  that the term in braces lies in $\CC.$
\end{proof}

\section{Geometry of generic payoff matrices}

For a given $m\times n$ payoff matrix, the existence of arbitrage
amounts to the statement that $\PPA$ intersects the positive orthant
$\OO^+ := \left\{ x \in \RR^m : x_j>0,~j=1,\ldots,m\right\}.$ Thus the
probability that a random payoff matrix exhibits arbitrage equals the
probability that its column space meets the positive orthant which, in
turn, is the probability that a random $n$-dimensional subspace of
$\RR^m$ intersects $\OO^+$.

This leads us to ask: How many orthants does an $n$-dimensional
subspace of $\RR^m$ intersect? For the most part, the answer is
independent of the choice of the subspace and is closely related to
other geometric counting problems.
See \cite{buck} in particular, but also
\cite{arrangements2000},
\cite{Brualdi-book}~pages~285--290,
\cite{Comet1974}~page~73,
\cite{cheese-solution},
\cite{may+smith},
\cite{orlik},
\cite{weisstein:plane-division},
\cite{monthly-cheese},
\cite{zaslavsky}, and
\cite{zaslavsky-mem}.

Let $e_1,e_2,\ldots,e_m$ denote the standard orthonormal basis of $\RR^m$.

\begin{dfn}
  Let $\VV$ be an $n$-dimensional subspace of $\RR^m$. We say that
  $\VV$ is \emph{generic} if for some (and therefore for any) basis
  $\{v_1,v_2,\ldots,v_n\}$ of $\VV$ and for any $m-n$ standard basis
  vectors $e_{i_1},\ldots,e_{i_{m-n}}$, the vectors
  $\{v_1,v_2,\ldots,v_n, e_{i_1},\ldots,e_{i_{m-n}}\}$ are linearly
  independent.  Likewise, an $m\times n$ matrix $A$ is called
  \emph{generic} if  $\PPA$ is a generic subspace of
  $\RR^m$.
\end{dfn}

\medbreak

Next we present some notation for the orthants of $\RR^m$.
Let $\Pi_j = e_j^\perp$. The subspaces $\Pi_j$ are the coordinate
hyperplanes and these separate $\RR^m$ into orthants, i.e., the
connected components of $\RR^m - \bigcup \Pi_j$. Two vectors $v$ and
$w$ (none of whose coordinates is zero) are in the same orthant
provided the sign of $v_i$ equals the sign of $w_i$ for all $i$.

Denote by $\SSS_m$ the set of $m$-vectors $\delta =
(\delta_1,\ldots,\delta_m)$ where each $\delta_i$ is $\pm1$.
Multiplication of members of $\SSS_m$ is defined coordinatewise.
The orthant $\OO_\delta$ is defined by
$$
\OO_\delta = \{x \in \RR^m : \delta_i x_i > 0,~i=1,\ldots,m \}.
$$
The positive orthant $\OO^+$ is simply $\OO_{(1,1,\ldots,1)}$.

\medbreak

A generic subspace $\VV$ of $\RR^m$ intersects some subset of the
orthants of $\RR^m$. Two points in $\VV$ lie in different orthants of
$\RR^m$ exactly when they are separated by some coordinate
hyperplane(s) $\Pi_j$. Thus, there is a one-to-one correspondence
between the orthants intersected by $\VV$ and the connected components
of
$$
\VV - \bigcup_{j=1}^m \Pi_j .
$$

The intersections of $\VV$ with the coordinate hyperplanes $\Pi_j$ are
subspaces of $\VV$ with particular properties; we show that they have
codimension~$1$ (i.e., have dimension $n-1$) and lie in general
position.

\begin{dfn}
  Subspaces $H_1,\ldots,H_m$ of codimension 1 in an $n$-dimensional
  vector space are said to be in \emph{general position}
  provided that $\dim\left( \bigcap_{j \in J} H_j\right)=n-\vert
  J\vert$ for all $J \subseteq \{1,\ldots,m\}$ with $1\leq \vert J
  \vert \leq n.$
\end{dfn}

For example, when $n=2$, any collection of distinct lines through the
origin are in general position. When $n=3$, a collection of distinct planes
through the origin are in general position provided no three of them
intersect in a line.

For a subspace $H$ of codimension 1 in a  vector space  $\VV$, the
complement $\VV - H$ consists of a pair of half-spaces that we can label
arbitrarily as $H^+$ and $H^-.$

Given $m$ subspaces $H_1,\ldots,H_m$ of codimension 1 whose associated
half-spaces have been labeled, we can assign to every point in the complement $\VV -
\bigcup_{i=1}^m H_i$ an $m$-vector
$\delta(x) = (\delta_1(x),\ldots,\delta_m(x))$ of signs, where the $\delta_i(x)$ indicates
which half-space, $H_i^+$ or $H_i^-,$ $x$ lies in. For each sign vector
$\delta \in \SSS_m$ the set $\bigcap_{i=1}^m \left\{ x \in \VV ~:~ x\in
H_i^{\delta_i} \right\}$ is an intersection of open half-spaces, so it is either empty
or it is the interior of a convex polyhedron. Thus, the
$H_1,\ldots,H_m$ decompose $\VV - \bigcup_{i=1}^m H_i$ into connected components,
each associated with some $\delta \in \SSS_m$, which we refer to as
\emph{cells}.

\begin{lemma}
  \label{generic_implies_general_position_lemma}
  Let $A$ be a generic $m \times n$ matrix and define $H_i = \PPA
  \cap \Pi_i$ for $i=1,\ldots,m.$ Then $H_1,\ldots,H_m$ are subspaces
  of codimension~$1$ in general position in $\PPA.$ Furthermore, the
  connected components of $\PPA - \bigcup_{i=1}^m H_i$ correspond to
  the orthants that $\PPA$ intersects.
\end{lemma}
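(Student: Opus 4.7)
The plan is to verify the three claims in order: (a) each $H_i$ has codimension one in $\PPA$, (b) the family $H_1,\ldots,H_m$ is in general position, and (c) the connected components of $\PPA - \bigcup_i H_i$ correspond to the orthants that $\PPA$ meets. Throughout I would use the following reformulation of genericity: for every $K \subseteq \{1,\ldots,m\}$ with $|K| = m-n$, the subspaces $\PPA$ and $\operatorname{span}(e_k : k \in K)$ span $\RR^m$ (equivalently, they meet only at $0$).

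For (a), I would observe that $H_i$ is the kernel of the coordinate functional $\phi_i : \PPA \to \RR$, $x \mapsto x_i$, so $\dim H_i \in \{n-1, n\}$, with the larger value occurring precisely when $\PPA \subseteq \Pi_i$. If that inclusion held and $n < m$, picking any $K \subseteq \{1,\ldots,m\}\setminus\{i\}$ of size $m-n$ would force $\PPA + \operatorname{span}(e_k : k \in K) \subseteq \Pi_i$, contradicting genericity; the case $n = m$ is trivial since there $\PPA = \RR^m$ and $H_i = \Pi_i$.

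For (b), fix $J \subseteq \{1,\ldots,m\}$ with $1 \leq |J| \leq n$, let $J^c$ denote its complement, and note that
$$
\bigcap_{j \in J} H_j \;=\; \PPA \cap \bigcap_{j \in J}\Pi_j \;=\; \PPA \cap W_J, \qquad W_J := \operatorname{span}(e_i : i \in J^c).
$$
Since $\dim W_J = m - |J|$, the identity $\dim(\PPA + W_J) + \dim(\PPA \cap W_J) = n + m - |J|$ yields $\dim(\PPA \cap W_J) \geq n - |J|$, with equality exactly when $\PPA + W_J = \RR^m$. Because $|J| \leq n$ forces $|J^c| \geq m - n$, I can choose $K \subseteq J^c$ with $|K| = m-n$; genericity then gives $\PPA + \operatorname{span}(e_k : k \in K) = \RR^m$, and this sum sits inside $\PPA + W_J$, so the latter fills $\RR^m$ and the intersection has the required dimension $n - |J|$.

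For (c), I would write $\PPA - \bigcup_i H_i = \PPA - \bigcup_i \Pi_i$, which decomposes as the disjoint union $\bigcup_{\delta \in \SSS_m} (\PPA \cap \OO_\delta)$. Each nonempty piece $\PPA \cap \OO_\delta$ is the intersection of the subspace $\PPA$ with an open convex cone, hence is open in $\PPA$ and convex (so path-connected). A partition of a topological space into disjoint nonempty open path-connected pieces must coincide with the partition into connected components, yielding the claimed bijection. The main obstacle is the general-position step (b); once the linear-independence form of genericity is recast as the sum identity $\PPA + W_J = \RR^m$, the dimension count closes everything quickly, and (a) and (c) reduce to routine bookkeeping on top of that setup.
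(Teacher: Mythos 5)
Your proof is correct and follows essentially the same route as the paper: the key step, showing general position, uses the identical dimension count $\dim(\PPA \cap W_J) = n + (m-|J|) - \dim(\PPA + W_J)$ together with the observation that genericity makes $\PPA + W_J$ fill $\RR^m$ once one selects $m-n$ of the standard basis vectors indexed by $J^c$. The paper treats the codimension-$1$ claim implicitly (it is the $|J|=1$ case of the general-position count) and dismisses the connected-components claim as ``elementary,'' whereas you spell both out; your argument for (c) via the partition of $\PPA - \bigcup_i \Pi_i$ into the open convex pieces $\PPA \cap \OO_\delta$ is exactly the right way to make that last step rigorous.
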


\begin{proof}
  Let $k=\dim(\PPA)$ and
  let $J \subseteq \{ 1,\ldots,m\}$ with $1\leq \vert J \vert \leq k,$
  then $\vert J^c \vert \geq m-k$ so we can find distinct indices
  $i_1,\ldots,i_{m-k} \in J^c.$ Thus
  $$
  \bigcap_{j\in J} \Pi_j = \mbox{span} \{ e_i~:~ i \in J^c \}
  \supseteq \mbox{span} \{ e_{i_1}, \ldots, e_{i_{m-k}} \},
  $$
  and consequently
  $$
  \PPA + \bigcap_{j\in J} \Pi_j \supseteq \mbox{colspace}\left[
    A,e_{i_1},\ldots,e_{i_{m-k}} \right] ,
  $$
  so that by the genericity assumption
  $$
  \dim ( \PPA + \bigcap_{j\in J}\Pi_j )=k+m-k=m.
  $$
  It follows that
  \begin{align*}
    \dim( \PPA \cap \bigcap_{j\in J} H_j) &= \dim( \PPA \cap
    \bigcap_{j\in J} \Pi_j) \\
    &= \dim( \PPA ) + \dim(\bigcap_{j\in J} \Pi_j) - \dim( \PPA
    +\bigcap_{j\in J} \Pi_j)
    \\
    &= k + (m - \vert J \vert) - m
    \\
    &= \dim ( \PPA ) -\vert J \vert,
  \end{align*}
  so the subspaces $\PPA \cap \Pi_i$ are subspaces in $\PPA$ in
  general position.

  The second claim is elementary.
\end{proof}

\bigbreak

Thus, the number of orthants intersected by a generic $n$-dimensional
subspace of $\RR^m$ equals the number of cells determined by $m$
general-position, codimension-$1$ subspaces of $\RR^n$. Our next step
is to show that this value is given by
$$
Q(m,n) := 2 \left[
  \binom{m-1}0 + \binom{m-1}1 + \cdots + \binom{m-1}{n-1}
\right] .
$$
Figure~\ref{fig:q-table} provides a small table of $Q(m,n)$ values.


\begin{figure}[ht]
\begin{center}
  \begin{tabular}[h]{|c||rr|rr|rr|rr|}
    \hline
    & \multicolumn{8}{c|}{$n$} \\
    \hline
    $m$  & 1 & 2 & 3 & 4 & 5 & 6 & 7 & 8 \\
    \hline\hline
    1 & 2 &  2&  2 &  2 &  2 &  2 &  2 &  2
    \\
    2 & 2 & 4 & 4 & 4 & 4 & 4 & 4 & 4
    \\
    \hline
    3 & 2 & 6 & 8 & 8 & 8 & 8 & 8 & 8
    \\
    4 & 2 & 8 & 14 & 16 & 16 & 16 & 16 & 16
    \\
    \hline
    5 & 2 & 10 & 22  & 30 & 32 & 32 & 32 & 32
    \\
    6 & 2 & 12 & 32 & 52 & 62 & 64 & 64 & 64
    \\
    \hline
    7 & 2 & 14 & 44 & 84 & 114 & 126 & 128 & 128
    \\
    8 & 2 & 16 & 58 & 128 & 198 & 240 & 254 & 256 \\
    \hline
  \end{tabular}
\end{center}
\caption{A table of $Q(m,n)$ values.}
\label{fig:q-table}
\end{figure}

\begin{prop}
  \label{prop:Q-props}
  For positive integers $m$ and $n$ we have
  \begin{itemize}
  \item [(i)] $Q(m,1) = 2$,
  \item [(ii)] $Q(m,2) = 2m$,
  \item [(iii)] for $m \le n$, $Q(m,n)=2^m$  and so, in particular,
    $Q(1,n)=2$, and
  \item [(iv)] for $m,n\ge2$, $Q(m,n) = Q(m-1,n) + Q(m-1,n-1)$.
  \end{itemize}
\end{prop}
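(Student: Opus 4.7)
The plan is to prove each of the four parts by direct computation from the definition
$$Q(m,n) = 2\sum_{k=0}^{n-1}\binom{m-1}{k},$$
using only elementary binomial identities; no part requires induction on the geometry, since the geometric content has already been folded into the formula for $Q(m,n)$.

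For parts (i) and (ii), I would just substitute. Setting $n=1$ leaves only the $k=0$ term, giving $Q(m,1)=2\binom{m-1}{0}=2$. Setting $n=2$ gives $Q(m,2)=2\bigl[\binom{m-1}{0}+\binom{m-1}{1}\bigr]=2[1+(m-1)]=2m$.

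For part (iii), when $m\le n$ the upper summation index $n-1$ is at least $m-1$, and $\binom{m-1}{k}=0$ for $k\ge m$. So the sum truncates as
$$\sum_{k=0}^{n-1}\binom{m-1}{k} \;=\; \sum_{k=0}^{m-1}\binom{m-1}{k} \;=\; 2^{m-1},$$
by the standard identity that the full binomial row sums to $2^{m-1}$. Therefore $Q(m,n)=2\cdot 2^{m-1}=2^m$, and specializing to $m=1$ gives $Q(1,n)=2$.

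Part (iv) is where the one actual computation lives, but it is still routine. I apply Pascal's rule $\binom{m-1}{k}=\binom{m-2}{k}+\binom{m-2}{k-1}$ inside the definition of $Q(m,n)$ and split:
$$Q(m,n) \;=\; 2\sum_{k=0}^{n-1}\binom{m-2}{k} \;+\; 2\sum_{k=0}^{n-1}\binom{m-2}{k-1}.$$
The first sum is exactly $Q(m-1,n)$. For the second, I reindex with $j=k-1$; the $k=0$ term vanishes since $\binom{m-2}{-1}=0$, so the sum becomes $2\sum_{j=0}^{n-2}\binom{m-2}{j}=Q(m-1,n-1)$. Combining gives the claimed recursion. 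The only point requiring any care is remembering that $\binom{m-2}{-1}=0$ so the reindexing is legitimate; beyond that, the whole proposition is bookkeeping with Pascal's triangle and offers no real obstacle.
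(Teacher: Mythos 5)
Your proof is correct and takes essentially the same route as the paper: direct substitution for (i)--(iii) and a one-line Pascal's-rule computation for (iv). The only cosmetic difference is direction — you expand $Q(m,n)$ and split, while the paper collects $Q(m-1,n)+Q(m-1,n-1)$ and recombines — but these are the same calculation read backwards.
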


\begin{proof}
  Claims (i)--(iii) are elementary and (iv) is established by the
  following routine calculation:
  \begin{align*}
    Q(m-1,n) + Q(m-1,n-1) &=
    2\sum_{j=0}^{n-1} \binom{m-2}{j} + 2 \sum_{j=0}^{n-2}
    \binom{m-2}{j} \\
    &=
    \binom{m-2}0 + 2 \sum_{j=0}^{n-2}
    \left[
      \binom{m-2}{j+1} + \binom{m-2}{j}
    \right]
    \\
    &= 2 \binom{m-1}0 + 2 \sum_{j=0}^{n-2} \binom{m-1}{j+1}
    \\
    &=
    2 \sum_{j=0}^{n-1} \binom{m-1}j = Q(m,n).\qedhere
  \end{align*}
\end{proof}

\begin{lemma}
  \label{cell_count_lemma}
  Given $m$ subspaces of codimension $1$ in general position in a
  vector space $\VV$ of dimension $n$, the number of connected components
  of $\VV - \bigcup_{i=1}^m H_i$ depends only on $n$ and $m$ and is given
  by $Q(m,n)$.
\end{lemma}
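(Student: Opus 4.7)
My plan is to proceed by induction on $m$, using the recurrence in Proposition~\ref{prop:Q-props}(iv) to drive the inductive step. Denote the number of components by $N(m,n)$; the goal is $N(m,n) = Q(m,n)$. The base case $m=1$ is immediate, since a single codimension-$1$ subspace splits $\VV$ into two open half-spaces, giving $N(1,n)=2=Q(1,n)$. The degenerate case $n=1$ is handled separately, since then every codimension-$1$ subspace equals $\{0\}$ and $N(m,1)=2=Q(m,1)$ for any $m$.

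For the inductive step with $m,n \geq 2$, I would first delete $H_m$. The remaining arrangement $\{H_1,\ldots,H_{m-1}\}$ inherits general position (sub-collections of general-position families are in general position), so the inductive hypothesis gives $N(m-1,n)=Q(m-1,n)$ cells in $\VV$. Reintroducing $H_m$ splits an old cell $C$ into two new cells precisely when $H_m$ meets $C$, and in that case the separating set $C\cap H_m$ is a single relatively open convex piece of the arrangement $\{H_i \cap H_m : i<m\}$ inside $H_m$. Convexity of the open cells then sets up a bijection between ``old cells cut by $H_m$'' and the components of $H_m \setminus \bigcup_{i<m}(H_i \cap H_m)$.

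To invoke the inductive hypothesis a second time, this time inside $H_m$ (which has dimension $n-1$), I would verify that the $m-1$ subspaces $H_i \cap H_m$ are codimension-$1$ and in general position there. The key identity, for $J \subseteq \{1,\ldots,m-1\}$ with $|J| \leq n-1$, is
\[
\dim\Bigl(\bigcap_{j \in J}(H_j \cap H_m)\Bigr) = \dim\Bigl(\bigcap_{j \in J \cup \{m\}} H_j\Bigr) = n-(|J|+1) = (n-1)-|J|,
\]
which follows from the general-position hypothesis in $\VV$ because $|J \cup \{m\}| \leq n$. By induction the induced arrangement on $H_m$ yields $Q(m-1,n-1)$ components, and therefore $N(m,n) = Q(m-1,n) + Q(m-1,n-1) = Q(m,n)$.

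The only place where real care is needed is the bijection argument in the second paragraph: one must observe that an open convex cell $C$ meets a hyperplane $H_m$ in either the empty set or a single relatively open connected region, and conversely that each component of the induced arrangement in $H_m$ lies in the interior of a unique old cell because it avoids every $H_i$. Everything else is routine bookkeeping together with the general-position linear algebra displayed above.
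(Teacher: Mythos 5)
Your proof is correct and follows essentially the same route as the paper's: induction on $m$, deleting $H_m$, and using the recurrence $Q(m,n)=Q(m-1,n)+Q(m-1,n-1)$ by counting how many old cells $H_m$ splits via the induced arrangement $\{H_i\cap H_m\}$ inside $H_m$. You supply a bit more detail than the paper does on two points it leaves implicit --- the linear-algebra check that the induced arrangement is in general position in $H_m$, and the bijection between split cells and components of the induced arrangement --- but the argument is the same.
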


For example, see Figure~\ref{fig:sliced-sphere}.

\begin{figure}[ht]
  \begin{center}
    \includegraphics[scale=0.5]{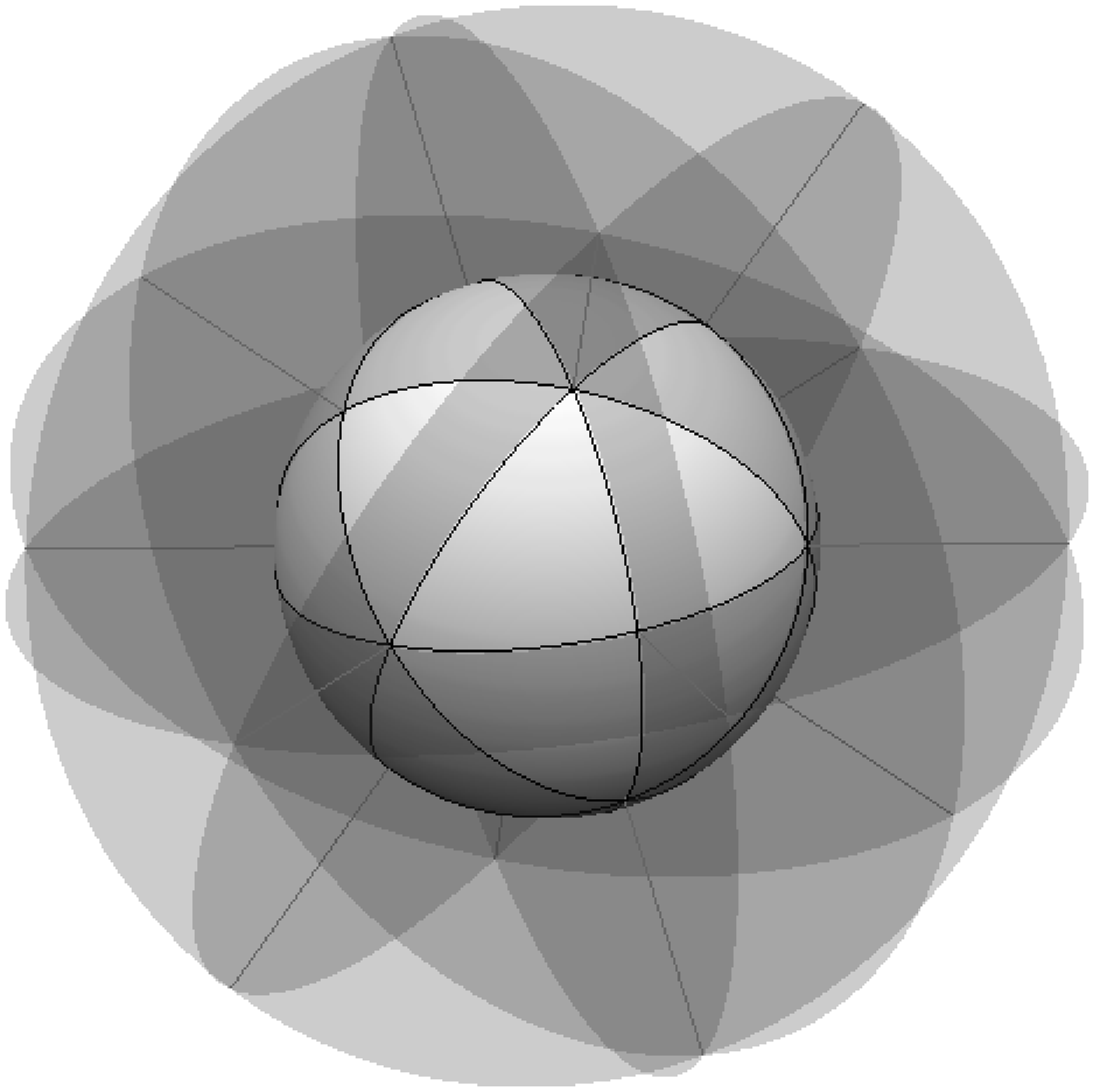}
  \end{center}
  \caption{Slicing $\RR^3$ with $4$ codimension-1, general-position
    subspaces gives $Q(4,3)=14$ regions.}
  \label{fig:sliced-sphere}
\end{figure}

\begin{proof}
  The result is easy to verify in case $m=1$ and in cases $n=1,2$, so we
  may assume $m,n\ge2$.

  It suffices, therefore, to show that the number of connected
  components satisfies the recursion formula in
  Proposition~\ref{prop:Q-props}.  We make the inductive assumption
  that the claim in the Lemma holds for $m-1$ subspaces (any $n$).


  Consider $m$ general-position, codimension-1 subspaces
  $\VV_1,\VV_2,\ldots,\VV_m$ of $\VV$. By induction, the first $m-1$
  of these cut $\VV$ into $Q(m-1,n)$ regions.

  Now consider the intersection of $\VV_1,\ldots,\VV_{m-1}$ with
  $\VV_m$. Let $\tilde\VV_i = \VV_i \cap \VV_m$. Note that
  $\tilde\VV_1,\ldots,\tilde\VV_{m-1}$ are codimension-1,
  general-position subspaces of $\VV_m$, and therefore divide $\VV_m$
  into $Q(m-1,n-1)$ regions.

  So the first $m-1$ subspaces $\VV_i$ divide $\VV$ into $Q(m-1,n)$
  regions and the addition of $\VV_m$ subdivides $Q(m-1,n-1)$ of those
  previous regions, giving a total of
  $$
  Q(m-1,n) + Q(m-1,n-1) = Q(m,n)
  $$
  regions, as required.
\end{proof}




Lemma~\ref{cell_count_lemma} and its proof are analogous to the
following well-known geometry problem \cite{buck}: Given $m$
hyperplanes (not necessarily subspaces) in general position in
$\RR^n$, the number of regions defined by these hyperplanes is
$$
\binom m0 + \binom m1 + \cdots + \binom mn
$$
which equals $\frac12 Q(m+1,n+1)$. We can show this connection
directly by the following geometric argument.

Consider $m$ general-position hyperplanes in $\RR^n$; these determine
$r$ regions. Extend $\RR^n$ to $\RP^n$ and consider this arrangement
of hyperplanes as sitting in $\RP^n$. When we do this some of the
regions ``wrap'' past infinity, so to maintain the same number of
regions $r$, we add the hyperplane at infinity to the arrangement.
See Figure~\ref{fig:slice-rp2}.
Recall that a point in $\RP^n$ corresponds to a line through the
origin in $\RR^{n+1}$ and a hyperplane in $\RP^n$ corresponds to a
codimension-1 subspace of $\RR^{n+1}$. Since the hyperplanes of the
original arrangement in $\RR^n$ (and $\RP^n$) are in general position,
these are now general position subspaces of $\RR^{n+1}$. Each region
of the arrangement of $m+1$ hyperplanes in $\RP^n$ splits into two
antipodal regions when viewed as subspaces of $\RR^{n+1}$. Thus, the
$r$ original regions yield $2r$ regions determined by $m+1$
codimension-1, general position subspaces of $\RR^{n+1}$, and so
$2r=Q(m+1,n+1)$.

\begin{figure}[ht]
  \begin{center}
    \includegraphics[scale=0.333]{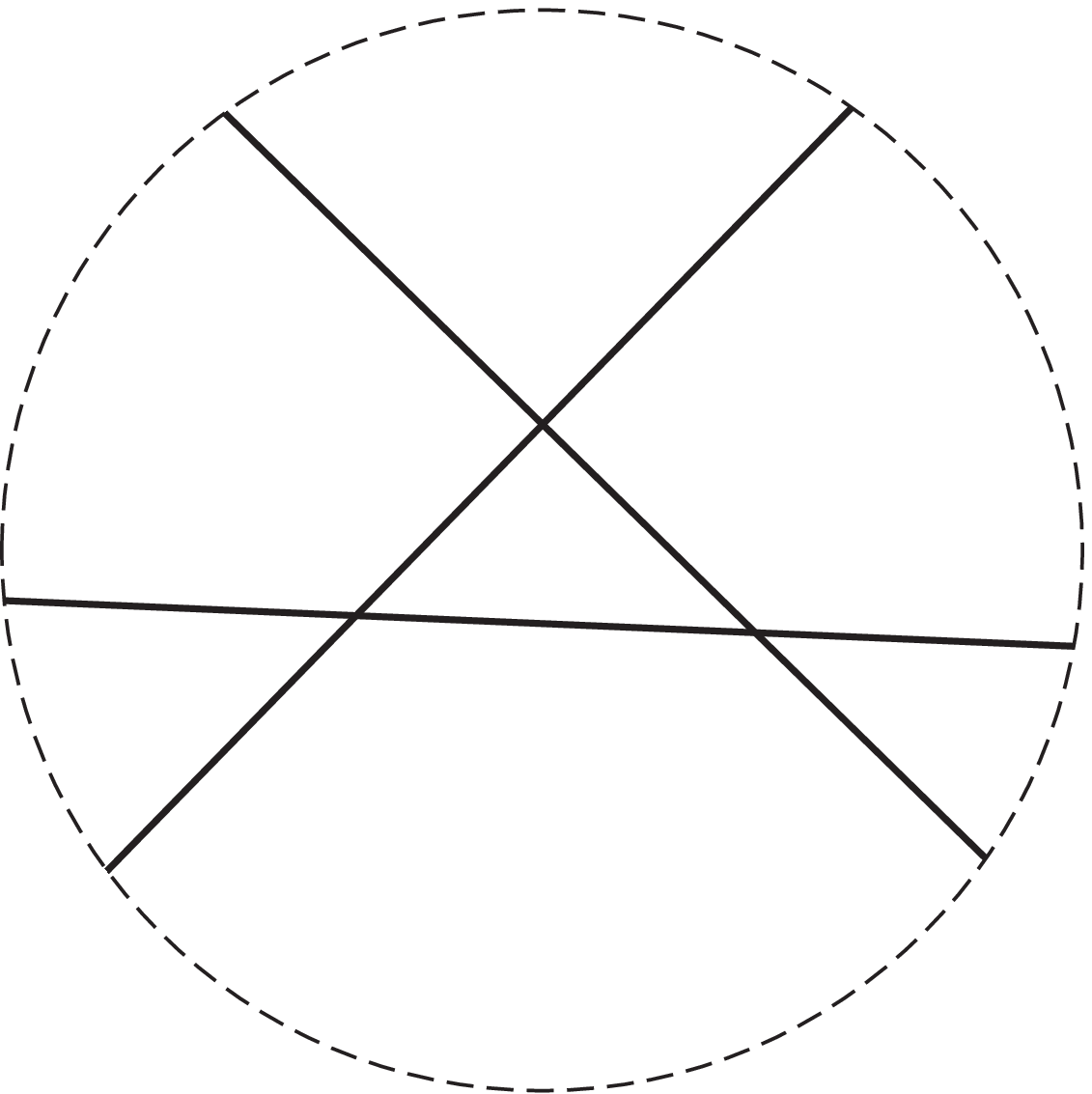}
  \end{center}
  \caption{Slicing the plane with three lines, or the projective plane
    with four lines, gives $\frac12 Q(4,3)=7$ regions.}
  \label{fig:slice-rp2}
\end{figure}

\bigbreak

The values $Q(m,n)$ have an additional geometric interpretation. For
example, the third column of the table in Figure~\ref{fig:q-table} is
sequence A014206 of \emph{The On-Line Encyclopedia of Integer
  Sequences} \cite{online-integers}; these numbers are the maximum
number of regions determined by $n$ circles in the plane; see
Figure~\ref{fig:circles}.
\begin{figure}[ht]
  \begin{center}
    \includegraphics[scale=0.3333]{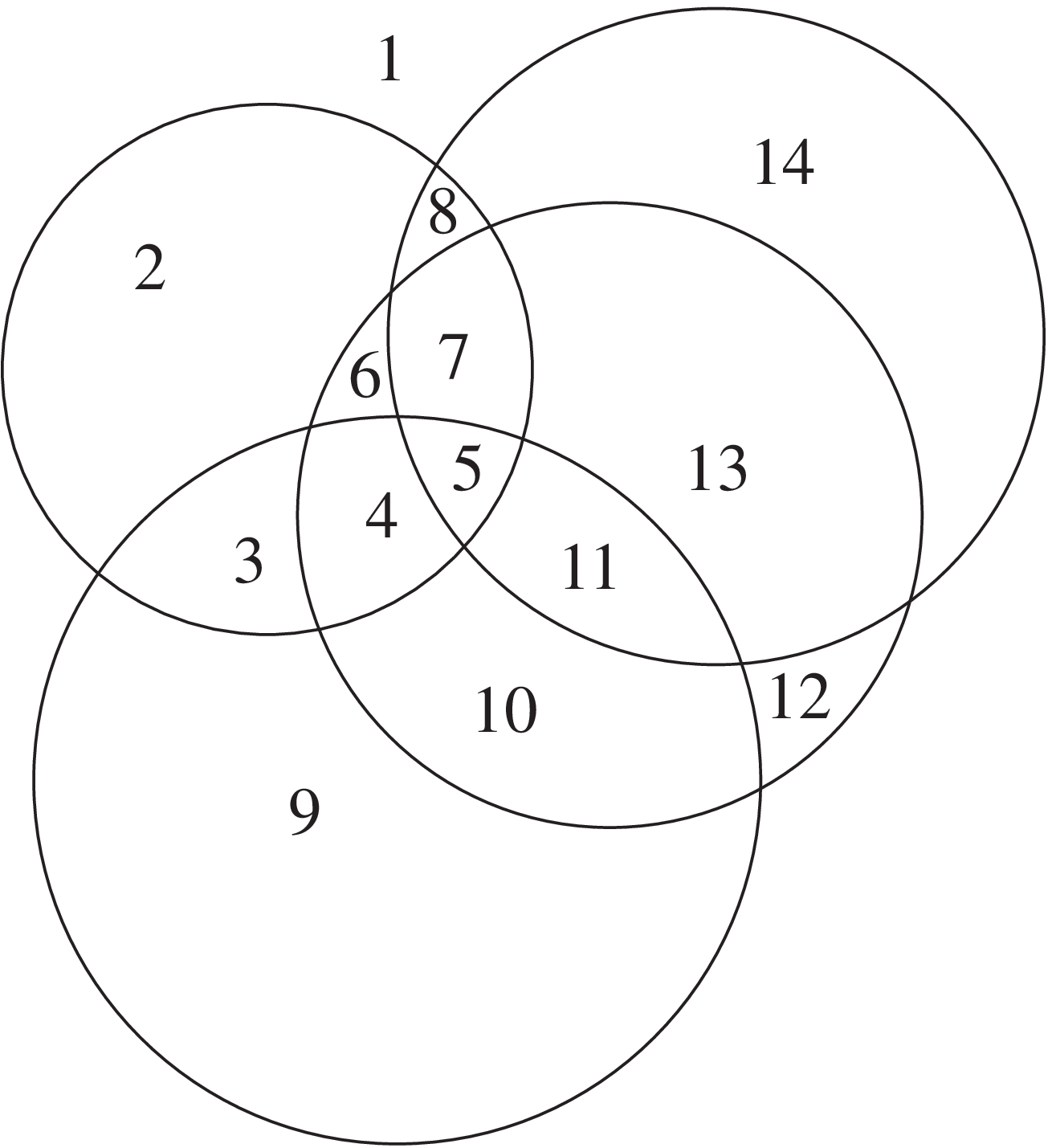}
  \end{center}
  \caption{Four circles in the plane determine $Q(4,3)=14$ regions.}
  \label{fig:circles}
\end{figure}
In general, the maximum number of regions determined by $m$ balls in
$\RR^{n-1}$ is $Q(m,n)$.

The various geometric interpretations of $Q(m,n)$ are collected in the
following result.

\begin{theorem}
  \label{thm:super-qnd}
  Let $m,n$ be positive integers. Then $Q(m,n)$ equals all of the
  following:
  \begin{enumerate}
  \item $2\sum_{k=0}^{n-1} \binom{m-1}{k}$.
  \item The number of regions defined by $m$ general-position,
    codimension-$1$ subspaces of $\RRn$.
  \item The number of orthants intersected by a generic
    $n$-dimensional subspaces of $\RR^m$.
  \item The maximum number of regions determined by $m$ balls in
    $\RR^{n-1}$.
  \item Twice the number of regions determined by $m-1$
    general-position hyperplanes in $\RR^{n-1}$.
  \item Twice the number of regions determined by $m$ general-position
    hyperplanes in $\RP^{n-1}$. \qed
  \end{enumerate}
\end{theorem}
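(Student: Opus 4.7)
The proof amounts to assembling prior results plus one new ingredient for item (4). Item (1) is the definition of $Q(m,n)$; item (2) is Lemma~\ref{cell_count_lemma}; item (3) follows by combining Lemma~\ref{generic_implies_general_position_lemma} (which puts the orthants met by $\PPA$ in bijection with the cells of an arrangement of $m$ codimension-$1$, general-position subspaces of $\PPA$) with Lemma~\ref{cell_count_lemma}. Items (5) and (6) are exactly the projective/antipodal argument given in the paragraph preceding the theorem: an affine arrangement of $m-1$ general-position hyperplanes in $\RR^{n-1}$ extends by the hyperplane at infinity to a projective arrangement of $m$ hyperplanes in $\RP^{n-1}$ with the same region count, and each projective region lifts to a pair of antipodal regions in the corresponding arrangement of $m$ codimension-$1$ subspaces of $\RR^n$, for a total of $Q(m,n)$ by Lemma~\ref{cell_count_lemma}.

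For item (4), my plan is an induction on $m$ parallel to the proof of Lemma~\ref{cell_count_lemma}, using stereographic projection in place of intersection with an additional hyperplane. Suppose inductively that any general-position configuration of $m-1$ $(n-2)$-spheres in $\RR^{n-1}$ produces $Q(m-1,n)$ regions, and add an $m$-th sphere $\Sigma$ in general position. Choose $p\in\Sigma$ avoiding all the intersection $(n-3)$-spheres $\Sigma\cap \Sigma_i$ (a generic condition). Stereographic projection $\sigma\colon \Sigma\setminus\{p\}\to\RR^{n-2}$ is a M\"obius map, so it sends the intersection spheres to $m-1$ general-position $(n-3)$-spheres in $\RR^{n-2}$ and is a homeomorphism. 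By the inductive hypothesis, the image arrangement cuts $\RR^{n-2}$ into $Q(m-1,n-1)$ cells; hence $\Sigma$ itself is cut into $Q(m-1,n-1)$ pieces, each splitting exactly one previously existing $(n-1)$-dimensional region of the ambient arrangement in two. The new region count is therefore
\begin{equation*}
Q(m-1,n) + Q(m-1,n-1) = Q(m,n),
\end{equation*}
by Proposition~\ref{prop:Q-props}(iv). The base case $m=1$ is immediate, since a single sphere gives $2 = Q(1,n)$ regions. A standard perturbation argument shows that arbitrary sphere configurations yield no more cells than generic ones, so $Q(m,n)$ is indeed the maximum.

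The principal obstacle is justifying that stereographic projection preserves the ``general position'' hypothesis, so that the inductive hypothesis applies to the image arrangement in $\RR^{n-2}$. This rests on two standard facts about M\"obius transformations: they send spheres to spheres or hyperplanes and preserve transverse intersections, and the genericity of $p$ ensures that no intersection sphere passes through $p$, so every image is an honest sphere rather than a hyperplane. Once this is in place the induction closes and all six equalities are established.
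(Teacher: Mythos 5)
Your treatment of items (1)--(3), (5), and (6) follows the paper exactly: (1) is the definition, (2) is Lemma~\ref{cell_count_lemma}, (3) combines Lemma~\ref{generic_implies_general_position_lemma} with Lemma~\ref{cell_count_lemma} (this is the content of Corollary~\ref{generic_payoff_matrix_corollary}), and (5)--(6) are the $\RP^{n-1}$/antipodal-lifting argument from the preceding paragraph. For item (4) the paper actually offers no proof at all --- it cites the OEIS for the circle case and simply asserts the general statement --- so your stereographic-projection induction is a genuine addition and is the right idea in spirit: the $m$-th bounding sphere $\Sigma$ is cut into pieces by its intersections with the others, and each piece bisects one existing region, giving $Q(m-1,n)+Q(m-1,n-1)$.

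The gap is in low dimension. Your step ``the image arrangement cuts $\RR^{n-2}$ into $Q(m-1,n-1)$ cells; hence $\Sigma$ is cut into $Q(m-1,n-1)$ pieces'' implicitly identifies region counts on $\Sigma$ and on $\Sigma\setminus\{p\}$. That is valid only when the regions have dimension at least $2$, i.e.\ $n\geq4$. When $n=3$, $\Sigma$ is a circle and removing the generic point $p$ splits one arc into two, so the count on $\Sigma\setminus\{p\}\cong\RR^1$ exceeds that on $\Sigma$ by one. Correspondingly, the inductive input you invoke at that stage --- $m-1$ $0$-spheres in $\RR^1$ give $Q(m-1,2)=2(m-1)$ regions --- is not what actually happens: $2(m-1)$ generic points cut $\RR$ into $2(m-1)+1$ regions. (Item (4) as literally stated fails for $n\leq 2$; even $m=1$ gives $3$ regions, not $Q(1,2)=2$.) The two off-by-one errors cancel in the $n=3$ case, so the final number is right, but the derivation does not close. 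The repair is to stop the dimension descent at $n=3$ and handle circles directly --- the $m$-th circle meets the first $m-1$ in $2(m-1)$ points and is thereby cut into $2(m-1)=Q(m-1,2)$ arcs on $S^1$ --- reserving the stereographic-projection step for $n\geq4$, where it works cleanly. You should also state the ``general position'' condition for spheres and verify it is preserved under projection, as you flag; both are routine transversality facts but need to be on the page.
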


\begin{corollary}
  \label{generic_payoff_matrix_corollary}
  Let $A$ be a generic $m \times n$ payoff matrix.  Then the number of
  orthants that $\PPA$ intersects is $Q(m,n).$
\end{corollary}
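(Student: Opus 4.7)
The plan is to note that this corollary is essentially a translation of item~3 of Theorem~\ref{thm:super-qnd} into the language of payoff matrices, so the proof consists of invoking the two preceding lemmas in sequence.

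First I would observe that by the genericity assumption on $A$, the column space $\PPA$ has dimension exactly $n$: taking $J = \{i_1,\ldots,i_{m-n}\}$ in the defining condition forces the $n$ columns of $A$ (together with the $m-n$ basis vectors $e_{i_k}$) to form a linearly independent set of size $m$, so in particular the columns of $A$ are linearly independent. Thus $\PPA$ is a genuine $n$-dimensional subspace of $\RR^m$ to which our setup applies.

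Next I would apply Lemma~\ref{generic_implies_general_position_lemma}: the subspaces $H_i = \PPA \cap \Pi_i$ (for $i=1,\ldots,m$) are codimension-$1$ subspaces of $\PPA$ in general position, and the orthants of $\RR^m$ that $\PPA$ meets are in bijection with the connected components of $\PPA - \bigcup_{i=1}^m H_i$. Finally, I would apply Lemma~\ref{cell_count_lemma} to the $m$ general-position, codimension-$1$ subspaces $H_1,\ldots,H_m$ sitting inside the $n$-dimensional ambient space $\PPA$; this lemma gives the number of connected components of the complement as exactly $Q(m,n)$.

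There is no real obstacle here: all the substantive work has already been done in Lemmas~\ref{generic_implies_general_position_lemma} and~\ref{cell_count_lemma} (and summarized as item~3 of Theorem~\ref{thm:super-qnd}). The only thing to verify carefully is that the genericity hypothesis on the matrix $A$ is precisely what is needed to invoke Lemma~\ref{generic_implies_general_position_lemma}, which is immediate from the definition of a generic matrix via its column space.
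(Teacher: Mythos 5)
Your proof is correct and follows essentially the same route as the paper's: establish that genericity gives $\dim(\PPA)=n$, invoke Lemma~\ref{generic_implies_general_position_lemma} for the general-position decomposition and the bijection with orthants, then apply Lemma~\ref{cell_count_lemma} to count the cells. The only difference is that you spell out in slightly more detail why genericity forces the columns of $A$ to be independent, which the paper states without elaboration.
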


\begin{proof}
  Since a generic payoff matrix is full-rank with $n \leq m$ the
  dimension of $\PPA$ is $n.$ By Lemma
  \ref{generic_implies_general_position_lemma}, $\PPA \cap \Pi_1,
  \ldots,\PPA\cap \Pi_m$ are general-position, codimension-1
  subspaces of  $\PPA$, and the nonempty cells they
  define are in one-to-one correspondence with the orthants that
  $\PPA$ intersects. Using Lemma \ref{cell_count_lemma} the number
  of such cells is $Q(m,n).$
\end{proof}

\section{Random payoff matrices and the probability of arbitrage}

Much of the research in mathematical finance is focused on modeling
various markets and using these models to price various financial
instruments. Our point of departure is to put aside entirely the
consideration of \emph{real} markets, and consider mathematically
convenient \emph{random} payoff matrices.

We ask: \emph{What is the probability that a random $m\times n$ payoff
  matrix admits an arbitrage opportunity?} To make this question
precise, one must give a probability distribution on
matrices. However, for a wide range of natural probability
distributions,\footnote{For example, let the $mn$ entries be iid
  $N(0,1)$ random values.}  %
the answer is $Q(m,n)/2^m$. Here's the intuition. The $n$-dimensional
column space of a random matrix $A$ intersects $Q(m,n)$
orthants. Since all orthants look the same, the probability that
$\PPA$ intersects $\OO^+$ is $Q(m,n)/2^m$.
In other words, a `random' $n$-dimensional subspace of $\RR^m$
intersects $\OO^+$ with probability $Q(m,n)/2^m$. (See \cite{may+smith}
who consider random polytopes in $\RR^m$.)

We now make this intuition precise.

\medbreak

We consider payoff matrices that are generic with probability
one. This is not a difficult condition to attain as the next result
makes clear.

\begin{lemma}
  \label{lemma_generic_with_probability_one}
  Let $n\leq m$ and suppose $A$ be a random $m \times n$ payoff matrix
  whose entries have a continuous joint distribution. Then $A$ is
  generic with probability one.
\end{lemma}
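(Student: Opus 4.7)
My plan is to reformulate genericity as the simultaneous nonvanishing of a finite collection of polynomials in the entries of $A$, and then to invoke the standard fact that the zero locus of a nontrivial real polynomial has Lebesgue measure zero.

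First I would establish a combinatorial reformulation. When $A$ has column rank $n$ its columns form a basis of $\PPA$, so for any $(m-n)$-subset $I=\{i_1,\ldots,i_{m-n}\}\subseteq\{1,\ldots,m\}$ with complement $J$ of size $n$, letting $A_J$ denote the $n\times n$ row-submatrix of $A$ indexed by $J$, the $m\times m$ matrix $[a^{(1)},\ldots,a^{(n)},e_{i_1},\ldots,e_{i_{m-n}}]$ can, after an appropriate row permutation and then a swap of its two column-blocks, be brought to block triangular form with an $(m-n)\times(m-n)$ identity block and the block $A_J$ on the diagonal. Its determinant is therefore $\pm\det(A_J)$, and so genericity of $A$ is equivalent to the condition that $\det(A_J)\neq 0$ for every $n$-element subset $J$ of $\{1,\ldots,m\}$. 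This single condition subsumes the rank-$n$ hypothesis, since any one nonvanishing $n\times n$ minor already forces $A$ to have full column rank.

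Second, each $\det(A_J)$ is a polynomial in the $mn$ entries of $A$ and is not identically zero (for example, it equals $1$ whenever $A_J$ is the identity), so its zero set is a proper algebraic subvariety of $\RR^{mn}$ and has Lebesgue measure zero. The non-generic event is the finite union of the $\binom{m}{n}$ such null sets, and therefore is itself Lebesgue-null.

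Finally, because the entries of $A$ have a continuous joint distribution, interpreted in the natural way as absolute continuity with respect to Lebesgue measure on $\RR^{mn}$ (which is precisely what makes the footnote's iid $N(0,1)$ example fit), the induced law of $A$ assigns probability zero to any Lebesgue-null set, and the lemma follows. The one point requiring any real care -- and the place where the main obstacle sits, if there is one -- is the interpretation of ``continuous joint distribution'': under the weaker reading that only the joint CDF is continuous, a short induction on the $mn$ coordinates using the fact that a continuous univariate distribution places no mass on any single point is still enough to show that each polynomial zero set has probability zero.
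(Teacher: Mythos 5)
Your proof is correct and follows essentially the same approach as the paper: both express failure of genericity as the vanishing of one of finitely many non-constant determinant polynomials in the entries of $A$, note that each such zero set is Lebesgue-null, and conclude by absolute continuity of the law of $A$. Your added observations (the reduction $\det[A,e_{i_1},\ldots,e_{i_{m-n}}]=\pm\det(A_J)$, and the remark on interpreting ``continuous joint distribution'') are sound refinements but do not change the route.
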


\begin{proof}
  We identify $m \times n$ payoff matrices with points in $\RR^{mn}$
  in an obvious way, and the assumption says that for some function
  $f:\RR^{mn} \rightarrow [0,+\infty)$ we have
  $$
  P[ A \in G] = \int_G f(x) dx
  $$
  for $G \subseteq \RR^{mn}.$ Genericity for $A$ fails if
  $\det[A,e_{i_1},\ldots,e_{i_{m-n}}]=0$ for some choice of distinct
  indices $i_1,\ldots,i_{m-n}.$ Since there are only finitely many
  choices for such indices, we need only show that each set
  $$
  \left\{ A\in \RR^{mn} ~:~ \det[A,e_{i_1},\ldots,e_{i_{m-n}}]=0 \right\}
  $$
  has Lebesgue measure 0. Since the determinant in this expression
  is a non-constant polynomial in the variables $A_{ij}$ the result
  follows.
\end{proof}

\begin{corollary}
  Let $n\leq m$ and suppose $A$ be a random $m \times n$ payoff matrix
  whose columns are iid random $m$-vectors whose distribution is
  continuous.  Then $A$ is generic with probability one.
\end{corollary}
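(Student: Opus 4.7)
The plan is to reduce this corollary directly to the preceding lemma by observing that the hypothesis on the columns forces the full joint distribution of the $mn$ entries of $A$ to be continuous on $\RR^{mn}$.

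Concretely, assume that the common continuous distribution of the columns $a^{(1)},\ldots,a^{(n)}$ on $\RR^m$ has density $f_0:\RR^m\to[0,+\infty)$ with respect to Lebesgue measure. By independence of the columns, the joint distribution of the tuple $(a^{(1)},\ldots,a^{(n)})$ on $(\RR^m)^n\cong\RR^{mn}$ is the product of the column distributions. Hence it too is absolutely continuous with respect to Lebesgue measure on $\RR^{mn}$, with density
$$
f(A) \;=\; \prod_{i=1}^n f_0\bigl(a^{(i)}\bigr).
$$
Identifying $A$ with its vector of entries in $\RR^{mn}$ in the same way as in Lemma~\ref{lemma_generic_with_probability_one}, this says exactly that $A$ has a continuous joint distribution in the sense required by that lemma.

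Applying Lemma~\ref{lemma_generic_with_probability_one} therefore yields that $A$ is generic with probability one, which is the claim. The only mildly delicate point is the reading of ``continuous distribution'' for a vector-valued random variable; but the convention established by the preceding lemma (namely, absolute continuity with respect to $mn$-dimensional Lebesgue measure, i.e.\ existence of a density) translates verbatim to the corresponding convention on $\RR^m$ for each column, and independence of the columns does the rest. No further measure-theoretic work is needed beyond invoking Fubini to recognise the product density.
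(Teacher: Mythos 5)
Your proof is correct and takes exactly the route the paper intends: the corollary is stated immediately after Lemma~\ref{lemma_generic_with_probability_one} with no proof given, because the reduction you spell out (independence of the columns turns the product of the per-column densities into a joint density on $\RR^{mn}$, whence the lemma applies) is the obvious and intended argument.
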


Observe that if $n \geq m$ and the entries of $A$ have a continuous
joint distribution, then with probability one $\PPA=\RR^m,$ in which
case $\PPA$ intersects every orthant, and in particular arbitrage
occurs with with probability one. For this reason, it is more natural
to focus on the case $n \leq m.$

\begin{dfn}
  An $m \times n$ random payoff matrix is said to be \emph{invariant
    under reflections} if its distribution is unaffected when any of
  its  rows is multiplied by $-1.$
\end{dfn}

For example, if the $mn$ entries of $A$ are chosen independently from
some probability distribution that is symmetric about $0$, then $A$ is
invariant under reflections.

Given $\delta\in \SSS_m$ we let $R_\delta$ denote the $m \times m$
diagonal matrix whose diagonal is $\delta.$

\begin{lemma}
\label{orthant_hitting_lemma}
Given an $m \times n$ payoff matrix $A,$ and $\delta,\omega \in \SSS_m$ we
have $\PPA \cap \OO_\delta \neq \emptyset$ if and only if
$\PP(R_\omega A) \cap \OO_{\omega\delta} \neq \emptyset.$

\end{lemma}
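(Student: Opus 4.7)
The plan is to exhibit an explicit linear bijection of $\RR^m$ that simultaneously carries $\PP(A)$ onto $\PP(R_\omega A)$ and the orthant $\OO_\delta$ onto the orthant $\OO_{\omega\delta}$. The obvious candidate is the diagonal map $R_\omega$ itself, and the whole argument hinges on the simple observation that $R_\omega^2 = I$, since each $\omega_i \in \{\pm 1\}$. So $R_\omega$ is a linear involution of $\RR^m$.

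First I would check that $R_\omega$ sends $\PP(A)$ onto $\PP(R_\omega A)$. If $v = Ax \in \PP(A)$, then $R_\omega v = R_\omega A x \in \PP(R_\omega A)$, and conversely any $w = (R_\omega A)y \in \PP(R_\omega A)$ satisfies $R_\omega w = A y \in \PP(A)$. Since $R_\omega$ is its own inverse, this gives a bijection $\PP(A) \to \PP(R_\omega A)$.

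Next I would check the orthant correspondence. For $v \in \RR^m$, the $i$th coordinate of $R_\omega v$ is $\omega_i v_i$, so
\[
(\omega\delta)_i (R_\omega v)_i \;=\; \omega_i \delta_i \cdot \omega_i v_i \;=\; \omega_i^2 \delta_i v_i \;=\; \delta_i v_i.
\]
Hence $v \in \OO_\delta$ (i.e., $\delta_i v_i > 0$ for all $i$) if and only if $R_\omega v \in \OO_{\omega\delta}$, so $R_\omega$ restricts to a bijection $\OO_\delta \to \OO_{\omega\delta}$.

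Combining the two bijections, $R_\omega$ restricts to a bijection $\PP(A) \cap \OO_\delta \;\longleftrightarrow\; \PP(R_\omega A) \cap \OO_{\omega\delta}$, so one intersection is nonempty precisely when the other is. There is no real obstacle here; the only thing one has to be careful about is keeping the two indexing conventions straight, namely that multiplication of sign vectors is coordinatewise and that $(\omega\delta)_i \omega_i = \delta_i$ because $\omega_i^2 = 1$.
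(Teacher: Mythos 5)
Your proof is correct and is essentially the paper's argument, just spelled out: the paper's one-line proof reduces everything to the observation that $Ax \in \OO_\delta$ if and only if $R_\omega A x \in \OO_{\omega\delta}$, which is exactly your coordinate computation $(\omega\delta)_i(R_\omega v)_i = \delta_i v_i$. Your explicit verification that $R_\omega$ is an involution carrying $\PP(A)$ onto $\PP(R_\omega A)$ makes the ``easy-to-verify fact'' fully transparent but does not change the substance.
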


\begin{proof}
  This follows from the easy-to-verify fact that $Ax \in \OO_\delta$
  if and only $R_\omega Ax \in \OO_{\omega\delta}$.
\end{proof}

\begin{lemma}
  \label{equal_probability_lemma}
  If an $m \times n$ random payoff matrix $A$ is invariant under
  reflections then $\PPA$ intersects each of the $2^m$ orthants in
  $\RR^m$ with equal probability.
\end{lemma}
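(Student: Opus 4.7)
The plan is to combine the reflection-invariance hypothesis with Lemma~\ref{orthant_hitting_lemma}, using the sign group $\SSS_m$ to transport one orthant to any other. The key observation is that for any $\omega \in \SSS_m$, the matrix $R_\omega A$ is obtained from $A$ by multiplying each row indexed by $i$ with $\omega_i = -1$ by $-1$. Because $A$ is invariant under reflections, $R_\omega A$ has the same distribution as $A$, for every $\omega \in \SSS_m$.

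Next I would fix two orthant labels $\delta, \delta' \in \SSS_m$ and set $\omega := \delta \delta'$ (coordinatewise product). Since $\delta_i^2 = 1$, we have $\omega \delta = \delta'$. Lemma~\ref{orthant_hitting_lemma} then gives the event-level equality
$$
\{\PPA \cap \OO_{\delta'} \neq \emptyset\} = \{\PP(R_\omega A) \cap \OO_{\omega \delta} \neq \emptyset\} \text{ corresponds to } \{\PPA \cap \OO_\delta \neq \emptyset\}
$$
under the distribution-preserving map $A \mapsto R_\omega A$. Taking probabilities and using that $R_\omega A \stackrel{d}{=} A$, I obtain
$$
P[\PPA \cap \OO_\delta \neq \emptyset] = P[\PP(R_\omega A) \cap \OO_{\omega\delta} \neq \emptyset] = P[\PPA \cap \OO_{\delta'} \neq \emptyset].
$$
Since $\delta$ and $\delta'$ were arbitrary, all $2^m$ orthant-intersection probabilities coincide.

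There is no serious obstacle here; the only subtlety is being careful to distinguish the event $\{\PPA \cap \OO_\delta \neq \emptyset\}$ (a statement about the random matrix $A$) from $\{\PP(R_\omega A) \cap \OO_{\omega\delta} \neq \emptyset\}$ (a statement about the \emph{transformed} matrix $R_\omega A$), and then invoke reflection invariance to identify the two probabilities. No measurability issues arise because each such event is determined by the sign pattern of the column space, which is a Borel condition on the entries of $A$.
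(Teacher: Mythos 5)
Your proposal is correct and follows essentially the same route as the paper's proof: apply Lemma~\ref{orthant_hitting_lemma} to relate the orthant-hitting event for $A$ to that for $R_\omega A$, then invoke reflection invariance to conclude $R_\omega A$ and $A$ are equidistributed, so the two probabilities coincide. The only cosmetic difference is that you fix $\delta,\delta'$ and choose $\omega=\delta\delta'$ explicitly, whereas the paper lets $\omega$ range over $\SSS_m$ and notes $\omega\delta$ sweeps out all orthant labels; these are the same argument.
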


\begin{proof}
  Given $\omega, \delta \in \SSS_m,$ Lemma \ref{orthant_hitting_lemma}
  gives
  $$
  P\left[ \PPA \cap \OO_\delta \neq \emptyset \right] = P\left[
    \PP(R_\omega A) \cap \OO_{\omega\delta} \neq \emptyset \right].
  $$
  On the other hand, invariance under reflections guarantees that
  $R_\omega A$ has the same distribution as $A.$ Thus, this last
  probability equals
  $$
  P\left[ \PPA \cap \OO_{\omega\delta} \neq \emptyset \right].
  $$
  That is, $\PPA$ intersects $\OO_\delta$ and
  $\OO_{\omega\delta}$ with equal probability.  Since $\omega\delta$
  varies over all of $\SSS_m$ as $\omega$ varies over $\SSS_m$ the
  result follows.
\end{proof}

Our main result is the following.

\begin{theorem}
  \label{main_theorem}
  If $A$ is a random $m \times n$ payoff matrix that is generic with
  probability one and invariant under reflections, then $\PPA$
  intersects the orthant $\OO_\delta$ with probability $Q(m,n)/2^m$
  for any $\delta \in \SSS_m.$ In particular, such a matrix admits an
  arbitrage opportunity with probability $Q(m,n)/2^m$.
\end{theorem}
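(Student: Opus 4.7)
The plan is to combine the deterministic count of Corollary~\ref{generic_payoff_matrix_corollary} with the symmetry provided by Lemma~\ref{equal_probability_lemma} through a simple linearity-of-expectation argument. Since $A$ is generic with probability one, Corollary~\ref{generic_payoff_matrix_corollary} tells us that, almost surely, $\PPA$ meets exactly $Q(m,n)$ of the $2^m$ orthants of $\RR^m$. Hence the random variable
\[
N(A) := \bigl| \{ \delta \in \SSS_m : \PPA \cap \OO_\delta \neq \emptyset \} \bigr|
\]
equals $Q(m,n)$ with probability one, so $E[N(A)] = Q(m,n)$.

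On the other hand, writing $N(A) = \sum_{\delta \in \SSS_m} \mathbf{1}[\PPA \cap \OO_\delta \neq \emptyset]$ and taking expectations, linearity yields
\[
E[N(A)] = \sum_{\delta \in \SSS_m} P\bigl[\PPA \cap \OO_\delta \neq \emptyset\bigr].
\]
By reflection invariance, Lemma~\ref{equal_probability_lemma} tells us that every term in this sum is equal to some common value $p$, so $E[N(A)] = 2^m p$. Equating the two expressions for $E[N(A)]$ immediately gives $p = Q(m,n)/2^m$, which is the desired probability for any fixed $\delta$; specializing to $\delta = (1,\ldots,1)$ gives the probability of arbitrage.

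There is no real obstacle here once the preceding lemmas are in hand: the only thing worth being careful about is that the genericity assumption is precisely what lets us replace the random quantity $N(A)$ by the constant $Q(m,n)$ inside the expectation (rather than merely bounding it), and that Lemma~\ref{equal_probability_lemma} requires reflection invariance of the full distribution of $A$, not just of its marginals. Both hypotheses are assumed in the statement, so the argument goes through directly, and the ``in particular'' conclusion about arbitrage follows by taking $\delta$ to be the all-ones sign vector so that $\OO_\delta = \OO^+$.
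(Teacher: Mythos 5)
Your proof is correct and follows the paper's argument essentially verbatim: define $N(A)$ as the number of orthants met, use genericity to conclude $E[N(A)]=Q(m,n)$ via Corollary~\ref{generic_payoff_matrix_corollary}, and use linearity of expectation together with Lemma~\ref{equal_probability_lemma} to equate $E[N(A)]$ with $2^m p$. Your closing remarks about where each hypothesis is used are accurate and well-observed.
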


\begin{proof}
  Let $N(A)$ denote the number of orthants that $\PPA$ intersects.
  Since $A$ is generic with probability one, we apply Corollary
  \ref{generic_payoff_matrix_corollary} to conclude that $E\left[ N(A)
  \right] = Q(m,n).$ On the other hand $N(A)$ is a sum of indicator
  random variables
  $$
  N(A) = \sum_{\delta \in \SSS_m} I \left\{ \PPA \cap \OO_\delta
    \neq \emptyset \right\}
  $$
  so that using Lemma \ref{equal_probability_lemma}
  \begin{align*}
  Q(m,n) &= E\left[ N(A) \right] \\
  & = \sum_{\delta \in \SSS_m} E\left[
    I \left\{\PPA \cap \OO_\delta \neq \emptyset \right\} \right]
  \\
  &= \sum_{\delta \in \SSS_m} P\left[ \PPA \cap \OO_\delta \neq
    \emptyset \right]
  \\
  &= \vert \SSS_m \vert P\left[ \PPA \cap \OO^+
  \right] \\
  &= 2^m P\left[ \PPA \cap \OO^+ \right]
  \end{align*}
  and the result follows.
\end{proof}

In particular, if $A$ is a random payoff matrix whose entries are
independent and normally distributed with mean zero, then $A$ is
generic with probability one and invariant under reflections.

\begin{corollary}
  Let $n$ be a positive integer.  If $A$ is a random payoff matrix
  satisfying the conditions of Theorem~\ref{main_theorem} with $n$
  columns (investments) and $2n$ rows (scenarios) then the probability
  of arbitrage is $1/2.$
\end{corollary}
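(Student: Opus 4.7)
The plan is to apply Theorem~\ref{main_theorem} with $m=2n$ and reduce the claim to the identity $Q(2n,n)/2^{2n} = 1/2$, i.e.\ $Q(2n,n) = 2^{2n-1}$. Since Theorem~\ref{main_theorem} directly gives the probability of arbitrage as $Q(m,n)/2^m$ for any matrix satisfying its hypotheses, the entire content of the corollary is this combinatorial identity.

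For the identity, I would start from the formula $Q(2n,n) = 2\sum_{k=0}^{n-1}\binom{2n-1}{k}$ provided by Theorem~\ref{thm:super-qnd}(1) (or equivalently the definition of $Q$). The key observation is the symmetry $\binom{2n-1}{k} = \binom{2n-1}{2n-1-k}$, which lets me pair up the lower and upper halves of the full binomial sum:
\begin{equation*}
2^{2n-1} = \sum_{k=0}^{2n-1}\binom{2n-1}{k} = \sum_{k=0}^{n-1}\binom{2n-1}{k} + \sum_{k=n}^{2n-1}\binom{2n-1}{k} = 2\sum_{k=0}^{n-1}\binom{2n-1}{k},
\end{equation*}
where the last equality uses the substitution $j = 2n-1-k$ in the second sum together with the binomial symmetry. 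Thus $\sum_{k=0}^{n-1}\binom{2n-1}{k} = 2^{2n-2}$, and so $Q(2n,n) = 2\cdot 2^{2n-2} = 2^{2n-1}$.

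Putting the two pieces together, the probability of arbitrage is $Q(2n,n)/2^{2n} = 2^{2n-1}/2^{2n} = 1/2$. There is no real obstacle here: the heavy lifting was done in Theorem~\ref{main_theorem}, and all that remains is an elementary binomial calculation exploiting the fact that $2n-1$ is odd, which is precisely what makes the two halves of the sum exactly balance.
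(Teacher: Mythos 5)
Your proof is correct and is the natural argument; the paper states this corollary without proof, evidently regarding it as immediate from Theorem~\ref{main_theorem} together with exactly the binomial-symmetry computation you spell out ($\binom{2n-1}{k}=\binom{2n-1}{2n-1-k}$, so the partial sum is half of $2^{2n-1}$, giving $Q(2n,n)=2^{2n-1}$).
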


For large random payoff matrices satisfying the conditions of the
theorem we can use the central limit approximation~\citep{FellerII} to
calculate the probability of an arbitrage.

\begin{corollary}
  Let $p(m,n)$ denote the probability of an arbitrage for an $m \times
  n$ random normal payoff matrix. If $n_m$ is a sequence such that
  $$
  \lim_{m \rightarrow \infty}
  \frac{n_m - (m-1)/2}{\sqrt{(m-1)/4}} = x \in (-\infty,\infty)
  $$
  then
  $$
  \lim_{m \rightarrow \infty} p(m,n_m) = \Phi(x),
  $$
  where $\Phi$ denotes the standard normal cumulative distribution
  function.
\end{corollary}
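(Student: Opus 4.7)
The plan is to recognize $p(m,n)$ as the cumulative distribution function of a Binomial$(m-1,1/2)$ random variable and then invoke the classical De Moivre--Laplace central limit theorem.

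First, a random payoff matrix with iid $N(0,1)$ entries is generic with probability one (by Lemma \ref{lemma_generic_with_probability_one}) and clearly invariant under reflections, so Theorem~\ref{main_theorem} gives $p(m,n)=Q(m,n)/2^m.$ Using the explicit formula from Proposition~\ref{prop:Q-props}(i), I rewrite
$$
p(m,n) \;=\; \frac{2}{2^m}\sum_{k=0}^{n-1}\binom{m-1}{k} \;=\; \frac{1}{2^{m-1}}\sum_{k=0}^{n-1}\binom{m-1}{k} \;=\; P[X_m \leq n-1],
$$
where $X_m\sim\mathrm{Bin}(m-1,1/2).$ Thus the probability of arbitrage is exactly the probability that a fair coin flipped $m-1$ times yields at most $n-1$ heads.

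Next I would standardize. Since $E[X_m]=(m-1)/2$ and $\mathrm{Var}(X_m)=(m-1)/4,$
$$
p(m,n_m) \;=\; P\!\left[\frac{X_m-(m-1)/2}{\sqrt{(m-1)/4}} \;\leq\; \frac{n_m-1-(m-1)/2}{\sqrt{(m-1)/4}}\right].
$$
The threshold on the right equals
$$
\frac{n_m-(m-1)/2}{\sqrt{(m-1)/4}} \;-\; \frac{1}{\sqrt{(m-1)/4}},
$$
whose first term tends to $x$ by hypothesis and whose second term tends to $0$ as $m\to\infty.$ Hence the threshold converges to $x.$

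Finally, by the De Moivre--Laplace theorem (a special case of the central limit theorem for sums of iid Bernoulli$(1/2)$ variables), the standardized variable $(X_m-(m-1)/2)/\sqrt{(m-1)/4}$ converges in distribution to $N(0,1).$ Because the limiting distribution function $\Phi$ is continuous on $\mathbb{R},$ the convergence of distribution functions is uniform, and therefore evaluation at a sequence of points converging to $x$ yields the limit $\Phi(x).$ This gives $p(m,n_m)\to \Phi(x),$ as required. The only subtlety worth watching is the replacement of a fixed evaluation point by the moving sequence of thresholds, but this is standard and follows immediately from uniform convergence guaranteed by continuity of $\Phi.$
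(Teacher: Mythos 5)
Your proof is correct and supplies exactly the argument the paper leaves implicit: the paper states only that one ``can use the central limit approximation'' and then asserts the corollary without a written proof. The key identification $p(m,n) = Q(m,n)/2^m = P[X_m \le n-1]$ with $X_m \sim \mathrm{Bin}(m-1,1/2)$, followed by De Moivre--Laplace and a uniform-convergence (P\'olya) argument to handle the moving threshold $n_m - 1$, is the intended route. One small slip: the explicit formula $Q(m,n) = 2\sum_{k=0}^{n-1}\binom{m-1}{k}$ is the \emph{definition} of $Q$ (restated as item~1 of Theorem~\ref{thm:super-qnd}), not Proposition~\ref{prop:Q-props}(i), which asserts only $Q(m,1)=2$; this is a citation error, not a mathematical one.
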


\end{document}